\documentclass[onecolumn]{IEEEtran}
\usepackage{caption}

\usepackage{mathtools}
\usepackage{algorithm}
\usepackage{algpseudocode}
\usepackage{amsmath}
\usepackage{amssymb}
\usepackage{booktabs}
\usepackage{array}
\usepackage{graphicx}
\newcolumntype{C}{>{$}c<{$}} 
\usepackage{fullpage}
\usepackage{amsthm}
\usepackage{color}
\usepackage{balance}
\newcommand{\newreptheorem}[2]{%
\newenvironment{rep#1}[1]{%
 \def\rep@title{#2 \ref{##1}}%
 \begin{rep@thm}}%
 {\end{rep@thm}}}
\newreptheorem{theorem}{Theorem}
\newtheorem*{rep@thm}{\rep@title}
\usepackage{pgfplots}
\pgfplotsset{compat=1.18} 

\usepackage[utf8]{inputenc} 
\usepackage[T1]{fontenc}
\usepackage{url}
\usepackage{ifthen}
\usepackage{nicefrac, xfrac}

\usepackage[hidelinks]{hyperref}
\usepackage{cleveref}
\usepackage{epsfig, cite}
\usepackage{amsfonts}
\usepackage{soul}
\usepackage{latexsym}
\usepackage{colortbl}
\usepackage{comment}
\usepackage[dvipsnames]{xcolor}
\usepackage{caption}
\usepackage{subcaption}

\hyphenation{op-tical net-works semi-conduc-tor}

\newcommand{\Fq}{\mathbb{F}_q}

\newtheorem{thm}{Theorem}[section]
\newtheorem{lem}[thm]{Lemma}
\newtheorem{prop}[thm]{Proposition}
\newtheorem{claim}[thm]{Claim}
\newtheorem{cor}{Corollary}

\newtheorem{defn}{Definition}[section]
\newtheorem{exmp}{Example}[section]
\newtheorem{rem}{Remark}
\newtheorem{const}{Construction}[section]

\newcommand{\roundDown}[1]{\ensuremath{\lfloor#1\rfloor}}

\newcommand{\bfa}{\ensuremath{\boldsymbol{a}}}
\newcommand{\bfb}{\ensuremath{\boldsymbol{b}}}
\newcommand{\bfc}{\boldsymbol{c}}

\newcommand{\bfh}{\ensuremath{\boldsymbol{h}}}

\newcommand{\bfs}{\ensuremath{\boldsymbol{s}}}

\newcommand{\bfv}{\ensuremath{\boldsymbol{v}}}

\newcommand{\bfx}{\ensuremath{\boldsymbol{x}}}
\newcommand{\bfy}{\ensuremath{\boldsymbol{y}}}

\newcommand{\bfzero}{\boldsymbol{0}}
\newcommand{\floor}[1]{\lfloor #1 \rfloor}
\newcommand{\LCS}{\textup{LCS}}

\newcommand{\cR}{\mathcal{R}}
\algrenewcommand\algorithmicrequire{\textbf{Input:}}
\algrenewcommand\algorithmicensure{\textbf{Output:}}

\begin{document}

\title{Improved Constructions of Linear Codes for Insertions and Deletions} 
\author{%
\IEEEauthorblockN{\textbf{Roee~Gross}\IEEEauthorrefmark{1}, \textbf{Roni~Con}\IEEEauthorrefmark{1} and {\textbf{Eitan~Yaakobi}\IEEEauthorrefmark{1}}}\\
   \IEEEauthorblockA{\IEEEauthorrefmark{1}%
    Department of Computer Science, 
    Technion---Israel Institute of Technology, Haifa 3200003, Israel    
    Email: \{roee.g, roni.con, yaakobi\}@cs.technion.ac.il}

\thanks{%
This work was supported in part by the Israel Science Foundation (ISF) grant 2462/24 and by the European Union (EIC, DiDAX 101115134). Views and opinions expressed are however those of the authors only and do not necessarily reflect those of the European Union or the European Research Council Executive Agency. Neither the European Union nor the granting authority can be held responsible for them.}

}

\maketitle

\begin{abstract}
In this work, we study linear error-correcting codes against adversarial insertion-deletion (indel) errors. While most constructions for the indel model are nonlinear, linear codes offer compact representations, efficient encoding, and decoding algorithms, making them highly desirable. A key challenge in this area is achieving rates close to the half-Singleton bound for efficient linear codes over finite fields. We improve upon previous results by constructing explicit codes over \(\mathbb{F}_{q^2}\), linear over \(\mathbb{F}_q\), with rate \(1/2 - \delta - \varepsilon\) that can efficiently correct a \(\delta\)-fraction of indel errors, where \(q = O(\varepsilon^{-4})\). Additionally, we construct fully linear codes over \(\mathbb{F}_q\) with rate \(1/2 - 2\sqrt{\delta} - \varepsilon\) that can also efficiently correct \(\delta\)-fraction of indels. These results significantly advance the study of linear codes for the indel model, bringing them closer to the theoretical half-Singleton bound. We also generalize the half-Singleton bound, for every code \(C \subseteq \mathbb{F}^n\) linear over \(\mathbb{E} \subset \mathbb{F}\) a subfield of $\mathbb{F}$, such that \(C\) has the ability to correct \(\delta\)-fraction of indels, the rate is bounded by $(1-\delta)/2$.
\end{abstract}

\newpage
\tableofcontents
\newpage

\section{Introduction}
Error-correcting codes are a fundamental tool in information theory and theoretical computer science, enabling reliable communication over noisy channels. Traditionally, the study of error correction has focused on two primary corruption models, substitutions and erasures. In these models, each symbol in a transmitted word can be replaced with another symbol (substitution) or marked as unknown (erasure). These classical frameworks, introduced by the seminal works of Shannon~\cite{Shannon} and Hamming~\cite{Hamming}, have been extensively studied, leading to efficient constructions of codes that are both encodable and decodable.

However, beyond substitution and erasure errors, another type of corruption, synchronization errors, poses unique challenges. Synchronization errors directly affect the length of the transmitted word, making them fundamentally different from substitution and erasure errors. The most widely studied framework for synchronization errors is the \emph{insertion-deletion (indel) model}. An insertion adds a symbol between existing symbols, while a deletion removes a symbol entirely. 

This natural theoretical model and possible applications across many fields, including the emerging DNA-based storage systems, has led many researchers in the information theory and computer science communities to study codes for these errors.
And indeed, there has been significant progress in recent years on understanding this model of indel errors (both on limitation and constructing efficient codes). Still, our comprehension of this model lags far behind our understanding of codes that correct erasures and substitution errors (we refer the reader to the following excellent surveys \cite{mitzenmacher2009survey,mercier2010survey,cheraghchi2020overview,haeupler-survey2021synchronization}).

It might come as a surprise that
most of the works constructing codes for the indel model are not linear codes. 
Indeed, linear codes are the dominant class of codes in the substitutions and erasures error models, where some notable examples include Reed--Solomon, Reed--Muller, Polar code, algebraic-geometry codes, and many more. 
The reason for the absence of linear codes in this model 
appears in \cite{abdel2007linear} where it was shown that \emph{any} linear code correcting a single indel error must have rate at most $1/2$. This shows that linear codes are provably worse than non-linear as non-linear codes correcting $1$ indel error can have rate $1- o(1)$. 

However, linear codes have many strong advantages over nonlinear codes. They have compact representations (generating/parity check matrices), they are efficiently encodable and in many cases, have an efficient decoding algorithm. 
Thus, studying them in the indel model was a subject of many recent works \cite{Cheng_2021,liu2022bounds,con2023reed,ji2023strict,cheng2023linear,con2023optimal,liu2024optimal,con2024random,xie2024new,li2025linear}.
One of the main questions regarding linear codes against indels is to design explicit and efficient codes over constant size alphabets that achieve the \emph{half-Singleton} bound. That is, the ultimate goal is to explicitly construct linear codes of $R = \frac{1}{2}(1 - \delta) - \varepsilon$ over fields of size $\textup{poly}(1/\varepsilon)$ that can correct efficiently from $\delta$-fraction of indel errors. 

In this paper, we improve the results of \cite{CST22}. Specifically, we construct codes over $\mathbb{F}_{q^2}$ where $q = \Theta(\varepsilon^{-4})$ that are linear over $\Fq$, can efficiently correct from $\delta$-fraction of indel errors, and have rate $1/2 - \delta - \varepsilon$.
Then, we show how to construct \emph{linear} codes over $\Fq$ of rate $1/2 - 2\sqrt{\delta} - \varepsilon$ that can efficiently correct from $\delta$ fraction of indel errors where $q = \Theta(\varepsilon^{-4})$. 
Finally, we show that the half-Singleton bound for linear codes holds also for codes that are linear only over a subfield.

\subsection{Previous works}
The model of correcting from indel errors was first introduced by Levenshtein \cite{levenshtein1966binary} and who showed that a code correcting $t$ deletions can, in fact, correct any combination of $t$ indel errors. Levenshtein also showed that the codes (that were originally designed to correct a single asymmetric error) by Varshamov and Tenengolts  \cite{varshamov1965codes} are asymptotically optimal codes (in terms of required redundancy) for correcting a single indel error. However, even to this date, the question of what is the required redundancy to correct a constant number of indel errors is not known (see the work of Alon et al. \cite{alon2023logarithmically} and references within). Also, there are many ingenious explicit constructions of codes with low redundancy that correct a constant number of deletions \cite{gabrys2018codes,sima2019two,brakensiek2017efficient,sima2020optimal,sima-q2020optimal,guruswami2021explicit,liu2024explicit}, just to name a few.

Our interest in this paper is in the regime where the number of indel errors is a \emph{constant fraction} of the codeword length (rather than a constant number which does not depend on the codeword length). We start with mentioning the known nonlinear codes and then focus on linear codes.

\textbf{Explicit nonlinear binary codes correcting a constant fraction of indels.} To the best of our knowledge, the first construction of \emph{asymptotically good}\footnote{By asymptotically good, we mean that the rate of the code and the fraction of the indels it can correct are numbers bounded away from $0$.} binary codes correcting indel errors is due to Schulman and Zuckerman \cite{schulman2002asymptotically}. In \cite{haeupler2017synchronization}, Haeupler and Shahrasbi constructed codes over an alphabet of size $\exp(O(1/\varepsilon))$ capable of correcting efficiently $\delta$-fraction of indels and have rate $1 - \delta - \varepsilon$. 
Note that a code correcting $\delta$-fraction of indels must have rate at most $1 - \delta$ by a simple Singleton bound and thus, the codes of \cite{haeupler2017synchronization} achieve optimal rate-error-correction trade-off. As discussed above, this shows that linear codes are provably worse than nonlinear codes when recovering from indels. 
Later, it was shown in \cite{haeupler2018synchronization} that an alphabet of size $\exp(-\Omega(1/\varepsilon))$ is needed to achieve a code correcting $\delta$-fraction of indels with rate $1 - \delta -\varepsilon$.
For \emph{binary} codes correcting a constant fraction of indels, the state-of-the-art efficient constructions are due to Cheng et al. \cite{cheng2018deterministic} and Haeupler \cite{haeupler2019optimal} who, independently constructed binary codes of rate $1 - O(\delta\log^{2}(1/\delta))$ correcting efficiently $\delta$-fraction of indels.

\textbf{Linear codes correcting indel errors.}
As mentioned above, the first work that studied the performance of linear codes against indels was by
Abdel-Ghaffar, Ferreira, and Cheng \cite{abdel2007linear} who proved that any linear code correcting even $1$ indel, must have rate at most $1/2$. Then, Cheng, Guruswami,  Haeupler, and Li~\cite{Cheng_2021} extended this bound and showed that a linear code correcting $\delta$-fraction of indels, must have rate
\(
\mathcal{R} \leq \frac{1}{2}\left( 1 - \frac{q}{q-1}\cdot \delta \right)\leq \frac{1}{2}(1 - \delta)
\)
where the first bound is termed as \emph{half-Plotkin bound} and the second bound as \emph{half-Singleton bound}. 
More specific upper bounds on special families of linear codes correcting indel errors were given in \cite{chen2022coordinate,ji2023strict,xie2024new}.

We now turn our focus to efficient linear codes correcting indel errors. The first ones to provide asymptotically good linear codes efficiently correcting indels are \cite{Cheng_2021}.  Specifically, they constructed \emph{binary} linear codes of rate $\approx 2^{-80}$ correcting $\delta < 1/400$ indel errors.

Then, \cite{CST22} constructed  for any $\varepsilon >0$, a linear code over a field of size $q=\Theta(\varepsilon^{-4})$, correcting $\delta$ fraction of indel errors with
\begin{align*}
    \mathcal{R} &\ge \frac{1}{8}(1 - 4\delta)- \varepsilon\;.
\end{align*}
They also constructed binary linear codes capable of correcting $\delta \leq 1/54$ fraction of deletions and achieve rate $(1-54 \delta)/1216$. We note that \cite{CST22} also considered the relaxation of codes which are linear over a subfield of the field. 
They showed that over $\mathbb{F}_{q^2}$, one can construct codes which are linear over $\Fq$, can correct (efficiently) $\delta$ fraction of insdel errors and have rate $(1-\delta)/4 - \varepsilon$.

Later, Cheng at al.~\cite{cheng2023linear} focused on the high rate and high noise regimes of linear codes correcting indel errors.
They constructed \emph{binary} linear codes of rate $\mathcal{R} = 1/2 - \varepsilon$ capable of correcting $\Theta(\varepsilon^{-3}/\log (1/\varepsilon))$ indel errors efficiently. 
Also, in the high noise regime, they provided constructions of linear codes correcting $1 - \varepsilon$ indels over alphabets of size $\textup{poly($\frac{1}{\varepsilon}$)}$ with rates $\Omega (\varepsilon^2)$ for inefficient decoding and with rate $\Omega(\varepsilon^4)$ with efficient decoding.

More recently, Li, Gabrys, and Farnoud \cite{li2025linear} explored list decoding from indels of linear codes \cite{li2025linear}. They showed a construction of linear codes of rate $1 - \varepsilon$ capable of correcting $\Omega(\varepsilon^4)$-fraction of indels. 
This result shows that the half-Singleton bound breaks when considering list decoding instead of unique decoding. We also remark here that there is a recent line of work that studied the performance of Reed--Solomon codes under indels \cite{con2023reed,con2023optimal,con2024random,liu2024optimal,beelen2025reed}. 

\subsection{Our Results}

\begin{figure}[t]
  \centering
  \begin{subfigure}[t]{0.48\linewidth}
    \centering
    \begin{tikzpicture}
      \begin{axis}[
        width=\linewidth,
        xlabel={$\,\delta$},
        ylabel={$R(\delta)$},
        xmin=0, xmax=1,
        ymin=0, ymax=0.55,
        grid=both,
        legend style={at={(0.5,0.8)},anchor=west,font=\footnotesize},
        legend cell align=left]
        \addplot[domain=0:1,samples=200,very thick] {0.5 - 0.5*x};
        \addlegendentry{Half-Singleton bound}
        \addplot[domain=0:1,samples=200,very thick,dotted] {0.25 - 0.25*x};
        \addlegendentry{CST22}
        \addplot[domain=0:0.5,samples=200,very thick,dashed] {1/2 - x};
        \addlegendentry{\Cref{thm:half-linear-code}}
      \end{axis}
    \end{tikzpicture}
    \caption{Half-linear case}
    \label{fig:half-linear-rate}
  \end{subfigure}%
  \hfill
  \begin{subfigure}[t]{0.48\linewidth}
    \centering
    \begin{tikzpicture}
      \begin{axis}[
        width=\linewidth,
        xlabel={$\,\delta$},
        ylabel={$R(\delta)$},
        xmin=0, xmax=1,
        ymin=0, ymax=0.55,
        grid=both,
        legend style={at={(0.5,0.8)},anchor=west,font=\footnotesize},
        legend cell align=left]
        \addplot[domain=0:1,samples=200,very thick] {0.5 - 0.5*x};
        \addlegendentry{Half-Singleton bound}
        \addplot[domain=0:0.25,samples=200,very thick,dotted] {0.125 - 0.5*x};
        \addlegendentry{CST22}
        \addplot[domain=0:0.0625,samples=200,very thick,dashed] {0.5 - 2*sqrt(x)};
        \addlegendentry{\Cref{thm:full linear results}}
      \end{axis}
    \end{tikzpicture}
    \caption{Linear case}
    \label{fig:linear-rate}
  \end{subfigure}

  \caption{Code rate $R(\delta)$ vs.\ indel fraction $\delta$ for different cases.}
  \label{fig:combined-rate}
\end{figure}
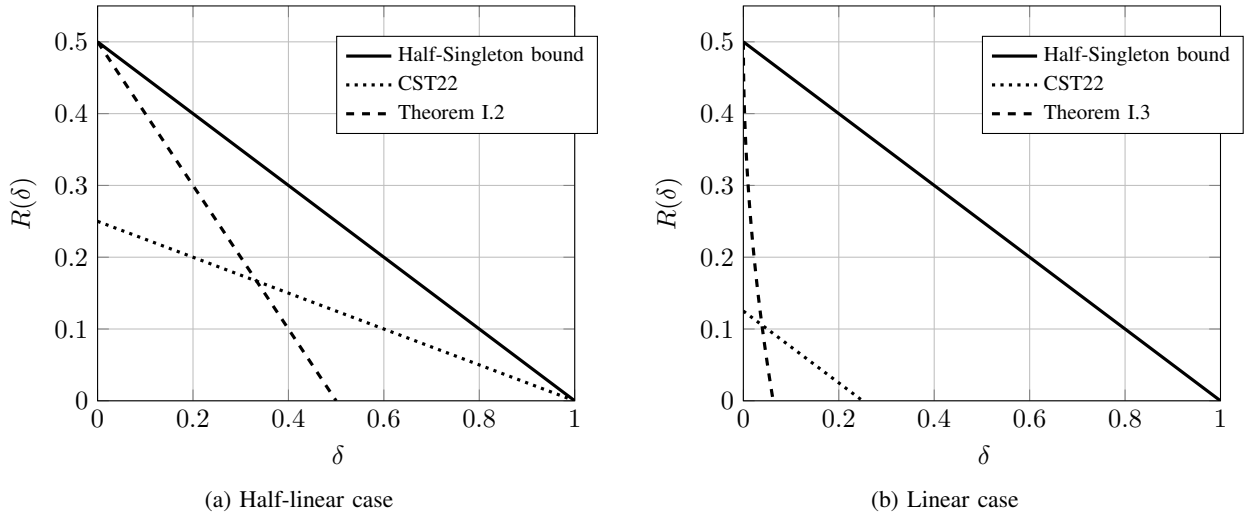

Our first result shows that the half-Singleton bound, proved in \cite{Cheng_2021} is true also when considering a code over $\Fq$ which is linear over a subfield of $\Fq$. 

\begin{thm}[Half–Singleton bound over a base field]
\label{thm:half_singleton_base}
Let \(\mathbb{E}\subset\mathbb{F}\) be finite fields and let \(\mathcal{C}\subseteq\mathbb{F}^{n}\) be an \(\mathbb{E}\)-linear code that can correct up to \(\delta n\) indels for some fixed \(\delta\in[0,1)\).  
Then,
\[
R(\mathcal{C}) \le \frac{1-\delta}{2}+\frac{1}{2n} \;.
\]
\end{thm}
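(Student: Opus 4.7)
The plan is to follow the strategy of Cheng, Guruswami, Haeupler, and Li~\cite{Cheng_2021}, who proved the half-Singleton bound for $\mathbb{F}$-linear codes, and to lift their argument to the $\mathbb{E}$-linear setting by carefully working with $\mathbb{E}$-dimensions rather than $\mathbb{F}$-dimensions throughout. Write $m=[\mathbb{F}:\mathbb{E}]$, $t=\lfloor\delta n\rfloor$, and $k=\dim_{\mathbb{E}}(\mathcal{C})$. Since $|\mathcal{C}|=|\mathbb{E}|^{k}$, the rate is $R(\mathcal{C})=k/(mn)$, so the target inequality is equivalent to $k \le m(n-t+1)/2$.

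The central construction is the $\mathbb{E}$-linear ``shift-confusion'' map $\Theta:\mathcal{C}\times\mathcal{C}\to\mathbb{F}^{n-t}$ defined by $\Theta(c,c')=(c_{i}-c'_{i+t})_{i=1}^{n-t}$. A pair $(c,c')\in\ker\Theta$ with $c\neq c'$ exhibits a common subsequence of length $n-t$, so $\LCS(c,c')\ge n-t$, which violates $t$-indel correction. Consequently, $\ker\Theta$ must be contained in the diagonal, and more precisely inside the subspace of pairs $(c,c)$ whose component is $t$-periodic (i.e., $c_i=c_{i+t}$ for all $i\le n-t$)---a space of $\mathbb{E}$-dimension at most $mt$, since $t$-periodic vectors in $\mathbb{F}^n$ are determined by their first $t$ coordinates. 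On the other hand, $\dim_{\mathbb{E}}(\mathcal{C}\times\mathcal{C})=2k$ and $\dim_{\mathbb{E}}(\mathbb{F}^{n-t})=m(n-t)$, so $\dim_{\mathbb{E}}\ker\Theta\ge 2k-m(n-t)$. Combining the two inequalities already gives the preliminary bound $k\le mn/2$.

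To sharpen this to the full half-Singleton bound, one combines the shift argument with a ``small-support'' argument: any nonzero codeword $c$ with $\mathrm{wt}(c)\le t$ would satisfy $\LCS(c,0)=n-\mathrm{wt}(c)\ge n-t$, which would also violate $t$-indel correction. This forces the projection $\Pi:\mathcal{C}\to\mathbb{F}^{n-t}$, $\Pi(c)=(c_{1},\ldots,c_{n-t})$, to be $\mathbb{E}$-injective, and analogously for the projection onto the last $n-t$ coordinates. Running the rank computation with both constraints exactly as in~\cite{Cheng_2021} yields $2k\le m(n-t+1)$, and hence $R(\mathcal{C})\le (n-t+1)/(2n)\le (1-\delta)/2+1/(2n)$. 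The main subtlety to be verified is that every $\mathbb{F}$-dimension appearing in~\cite{Cheng_2021} scales consistently by the factor $m$ under the passage to $\mathbb{E}$; in particular, the ``$+1$'' in the bound arises from a rank-intersection argument that survives the transition because the relevant subspaces (the $t$-periodic subspace and the small-support subspace) are defined coordinate-wise and are therefore automatically $\mathbb{E}$-linear.
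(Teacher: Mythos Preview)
Your preliminary step is correct: the shift map $\Theta(c,c')=(c_i-c'_{i+t})_{i=1}^{n-t}$ is $\mathbb{E}$-linear, a non-diagonal kernel element would force $\LCS(c,c')\ge n-t$ and hence $D_L(c,c')\le 2t$, and bounding $\dim_{\mathbb{E}}\ker\Theta$ by $mt$ yields $2k\le mn$. The sharpening step, however, has a genuine gap. You assert that ``running the rank computation with both constraints exactly as in~\cite{Cheng_2021} yields $2k\le m(n-t+1)$'', but the argument in~\cite{Cheng_2021}---which the present paper reproduces verbatim---is \emph{not} a rank-intersection argument. It is a puncturing reduction: delete the first $t-1$ coordinates from every codeword; the resulting code $\mathcal{C}'$ has length $n-t+1$, is still $\mathbb{E}$-linear, still corrects a single indel, and satisfies $|\mathcal{C}'|=|\mathcal{C}|$ (deleting fewer than $t$ symbols is injective on a $t$-indel-correcting code). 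Applying the single-indel bound $R\le 1/2$ to $\mathcal{C}'$ then gives $k\le m(n-t+1)/2$ directly. Your own ingredients do not manufacture the extra saving: combining $2k\le m(n-t)+\dim_{\mathbb{E}} P$ with $\dim_{\mathbb{E}} P\le mt$ only recovers $2k\le mn$, and the injectivity of $\Pi$ gives $k\le m(n-t)$, which is the ordinary Singleton bound and is weaker still. I do not see any intersection of the $t$-periodic subspace with the small-support constraint that yields the missing ``$+1$''.

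For comparison, the paper obtains the $R\le 1/2$ step by a different device, the Abdel-Ghaffar--Ferreira--Cheng lemma: a parity-check count over $\mathbb{E}$ (viewing $\mathcal{C}$ as sitting in $\mathbb{E}^{mn}$) produces a nonzero $\bfx\in\mathcal{C}$ whose prefix-sum vector $\bfc$, with $c_i=\sum_{j<i}x_j$, also lies in $\mathcal{C}$, and this pair witnesses a single-deletion confusion. Your shift--LCS argument is a valid and arguably cleaner substitute for that first step in the $\mathbb{E}$-linear setting; to complete the proof you should replace the unspecified ``rank-intersection'' with the puncturing reduction above.
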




We now present our constructions. The code constructions in this paper are explicit, have efficient, polynomial-time encoding and decoding algorithms. 
Our first construction presents half-linear codes with improved rate-error-correction tradeoff in the 
regime $\delta \leq 1/3$ (see \Cref{fig:combined-rate}). Specifically,

\begin{thm}[Informal, see \Cref{thm:half-lin-code-formal}]\label{thm:half-linear-code}
Let $\delta\geq 0$ and let $\varepsilon > 0$ be small enough. There exists an explicit and efficient code $\mathcal{C}$ defined over $\mathbb{F}_{q^2}$ that is linear over $\mathbb{F}_q$ for $q = \Theta(\varepsilon^{-4})$ such that $\mathcal{C}$ can correct efficiently $\delta$-fraction of indels and has rate $1/2 - \delta - \varepsilon$.

\end{thm}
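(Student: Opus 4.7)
The plan is to interleave a synchronization sub-code with a Reed--Solomon outer code in an $\mathbb{F}_q$-linear way over the two $\mathbb{F}_q$-components of $\mathbb{F}_{q^2}\cong\mathbb{F}_q\oplus\mathbb{F}_q\beta$: the first component carries a scalar sync signal and the second component carries an RS-encoded data payload. The key rate improvement over CST22 is to use the Haeupler--Shahrasbi sync decoder so that $\delta n$ indels are converted into at most $(2+O(\varepsilon))\delta n$ \emph{half-errors} (erasure-like marks) rather than substitution errors, letting the outer code tolerate half-error decoding at rate $1-2\delta-O(\varepsilon)$ instead of the $1-4\delta-O(\varepsilon)$ required for substitution decoding.

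Concretely, fix (i) an explicit Haeupler--Shahrasbi $\varepsilon$-synchronization string $\sigma=(\sigma_1,\dots,\sigma_n)\in\mathbb{F}_q^n$ and (ii) Reed--Solomon evaluation points $\alpha_1,\dots,\alpha_n\in\mathbb{F}_q$, both available at alphabet size $q=\Theta(\varepsilon^{-4})$. Set $k=(1-2\delta-2\varepsilon)n$ and define the encoder
\[
\mathrm{Enc}(\lambda,m)_i \;=\; \lambda\,\sigma_i \;+\; p_m(\alpha_i)\,\beta,\qquad (\lambda,m)\in\mathbb{F}_q\times\mathbb{F}_q^{k},\ i=1,\dots,n,
\]
where $p_m\in\mathbb{F}_q[X]_{<k}$ is the RS polynomial of $m$. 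The map is $\mathbb{F}_q$-linear, the image $\mathcal{C}\subseteq\mathbb{F}_{q^2}^n$ has $\mathbb{F}_q$-dimension $k+1$, and its rate is $(k+1)/(2n)=\tfrac12-\delta-\varepsilon+o(1)$. Given a received word $r\in\mathbb{F}_{q^2}^{n'}$ with $|n'-n|\le\delta n$, the decoder extracts the two $\mathbb{F}_q$-components $r^{(1)},r^{(2)}\in\mathbb{F}_q^{n'}$; for each non-zero $\lambda\in\mathbb{F}_q$ it feeds $r^{(1)}/\lambda$ to the Haeupler--Shahrasbi sync decoder to obtain an alignment of received positions with $\{1,\dots,n\}$ together with at most $(2+O(\varepsilon))\delta n$ half-error positions, and then feeds $r^{(2)}$ in this alignment, with the half-error flags, to the standard Reed--Solomon half-error decoder to recover $m$. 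This works because the number of half-errors is at most $n-k$ by the choice of $k$, and the correct $\lambda$ is selected by re-encoding and checking that the result has edit distance at most $\delta n$ from $r$. All steps run in time polynomial in $n$ and $1/\varepsilon$.

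The main obstacle is the degenerate branch $\lambda=0$, where the first component of every codeword is identically zero and the sync decoder has nothing to lock onto. The plan for handling it is to pick the Reed--Solomon evaluation points not arbitrarily but as a Con--Shpilka--Tamo style indel-correcting sequence, and to split the $\mathbb{F}_q$-message space $\mathbb{F}_q^k$ into a small ``high-rate'' part handled by the $\lambda\ne 0$ branch and a short ``low-rate'' part whose $\{0\}\times\mathbb{F}_q^{k'}$ image is covered directly by the Con--Shpilka--Tamo RS-indel decoder on $r^{(2)}$. Showing that a single evaluation-point sequence simultaneously supports the half-error decoder at $\lambda\ne 0$ and the RS-indel decoder at $\lambda=0$, while keeping the overall rate at $\tfrac12-\delta-\varepsilon$ and the alphabet at $q=\Theta(\varepsilon^{-4})$, is the main technical step; the rest of the analysis (correctness of the sync decoder, efficiency, and the arithmetic verifying $n-k\ge(2+O(\varepsilon))\delta n$) follows from the standard Haeupler--Shahrasbi and Reed--Solomon machinery.
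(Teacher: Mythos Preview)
Your construction has a genuine gap at the $\lambda=0$ branch that cannot be patched in the way you suggest. The code you define is the full $\mathbb{F}_q$-linear image of $(\lambda,m)\mapsto(\lambda\sigma_i+p_m(\alpha_i)\beta)_i$, so the subspace $\{0\}\times\mathbb{F}_q^k$ is automatically contained in the message space; you cannot ``split'' it off and restrict $\lambda=0$ to a smaller $\mathbb{F}_q^{k'}$ while keeping the code $\mathbb{F}_q$-linear (if $(1,m)$ and $(-1,0)$ encode to codewords, so does $(0,m)$ for every $m\in\mathbb{F}_q^k$). But on that subspace the first coordinate is identically zero and the second coordinate is a Reed--Solomon code of rate $k/n=1-2\delta-2\varepsilon$. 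By the half-Singleton bound---which you are trying to approach---no linear code of that rate over $\mathbb{F}_q$ can correct a $\delta$-fraction of indels once $\delta<1/3$. Hence there exist $m\neq m'$ whose encodings at $\lambda=0$ share a common subsequence of length $\ge(1-\delta)n$, and your code simply fails to have the claimed indel distance. This is a combinatorial failure of the code itself, not a decoding issue, so no choice of evaluation points or CST-style RS-indel decoder on $r^{(2)}$ can rescue it.

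The paper avoids this by keeping the CST22 encoding $(c_i,s_ic_i)$ unchanged---so the synchronization multiplier is the \emph{local} codeword symbol $c_i$ rather than a single global scalar---and instead modifying only the decoder. After discarding zero symbols and running \texttt{Match}, every erasure position is overwritten with $0$. The point is that the $\zeta n$ zeros of $\bfc$ that the algorithm had to throw away (because $s_i$ cannot be read when $c_i=0$) reappear as erasures after \texttt{Match}, and overwriting them with $0$ restores them exactly; only erasures coming from genuine adversarial indels (plus the $O(\sqrt{\tau}n)$ mismatch slack) become substitutions. This converts the error budget from $\delta+\zeta+O(\sqrt{\tau})$ down to $\delta+O(\sqrt{\tau})$ substitutions, and an outer AG code with $\delta_H>2\delta+O(\sqrt{\tau})$ then suffices, yielding rate $\tfrac12(1-\delta_H)=\tfrac12-\delta-\varepsilon$.
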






Our next result provides an efficient construction of fully linear codes that can correct from indel errors. 
\begin{thm}[Informal, see \Cref{thm:full-linear-code-formal}]\label{thm:full linear results}
Let $\delta < 1/16$ and let $\varepsilon > 0$ be small enough. There exists an explicit and efficient linear code over $\Fq$ for $q = \Theta(\varepsilon^{-4})$ that can correct $\delta$-fraction of indels and has rate $1/2 - 2\sqrt{\delta} - \varepsilon$.

\end{thm}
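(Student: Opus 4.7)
The plan is to lift the half-linear construction of \Cref{thm:half-linear-code} to a fully $\mathbb{F}_q$-linear code through a concatenation-style reduction. The key observation is that setting $\delta_1 := 2\sqrt{\delta}$ in the half-linear theorem already yields an $\mathbb{F}_q$-rate of $\tfrac{1}{2} - 2\sqrt{\delta} - \tfrac{\varepsilon}{2}$, matching the target rate; the remaining work is to show that such a half-linear outer code can be converted into a fully $\mathbb{F}_q$-linear one with the claimed $\mathbb{F}_q$-indel correction. The hypothesis $\delta < 1/16$ is exactly $\delta_1 < 1/2$, which is the regime in which the half-linear code has positive rate.

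Concretely, I would first invoke \Cref{thm:half-linear-code} to obtain an efficient half-linear code $\mathcal{C}_H \subseteq \mathbb{F}_{q^2}^n$, $\mathbb{F}_q$-linear, of $\mathbb{F}_q$-rate $\tfrac{1}{2} - 2\sqrt{\delta} - \tfrac{\varepsilon}{2}$, that corrects $\delta_1 n$ $\mathbb{F}_{q^2}$-indels. Then I would convert $\mathcal{C}_H$ into a fully $\mathbb{F}_q$-linear code $\mathcal{C} \subseteq \mathbb{F}_q^{N}$ with $N \approx 2n$ by fixing an $\mathbb{F}_q$-basis of $\mathbb{F}_{q^2}$, expanding each coordinate of $\mathcal{C}_H$ into its two $\mathbb{F}_q$-components, and augmenting with a sparse, $\mathbb{F}_q$-linear synchronization layer that lets the decoder recover the pairing of $\mathbb{F}_q$ symbols back into $\mathbb{F}_{q^2}$ symbols after indel errors. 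The decoder then operates in two stages: (i) use the synchronization layer to realign the received $\mathbb{F}_q$-string into a length-$n$ word over $\mathbb{F}_{q^2}$ in which at most $\delta_1 n$ coordinates are corrupted by indels; (ii) invoke the efficient $\mathbb{F}_{q^2}$-level decoder guaranteed by \Cref{thm:half-linear-code}.

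The main obstacle is the design and analysis of the synchronization layer. The layer must be $\mathbb{F}_q$-linear (so that standard non-linear synchronization-string tricks are unavailable), must cost only an $O(\sqrt{\delta})$-fraction of the rate, and must remain robust to indels that hit the markers themselves. A combinatorial, LCS-style argument then has to establish the quantitative claim that $\delta N$ $\mathbb{F}_q$-indels in $\mathcal{C}$ translate into at most $2\sqrt{\delta}\,n$ $\mathbb{F}_{q^2}$-misalignments at the output of the realignment stage; this is what ultimately drives the square-root loss in the rate. Without such a layer, a naive pairwise expansion admits ``phase-shifted'' common $\mathbb{F}_q$-subsequences of length close to $2n$ between codewords whose $\mathbb{F}_{q^2}$-level LCS is small, which would destroy the reduction.

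Secondary technical issues are to keep the alphabet at $q = \Theta(\varepsilon^{-4})$ independent of $\delta$, forcing the synchronization to be encoded in-alphabet rather than via auxiliary symbols, and to split the total $\varepsilon$ budget between the half-linear rate gap (through $\varepsilon/2$ in \Cref{thm:half-linear-code}) and the synchronization overhead (the remaining $\varepsilon/2$). Once the synchronization layer is in place, efficiency of the combined decoder follows immediately: the realignment stage is a dynamic-programming alignment against a fixed pattern, and the second stage is polynomial-time by \Cref{thm:half-linear-code}.
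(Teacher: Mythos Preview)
Your high-level plan matches the paper's: flatten the half-linear code $C^{\textup{HL}}\subseteq\mathbb{F}_{q^2}^n$ into $\mathbb{F}_q^{2n}$ via a basis expansion, then add an $\mathbb{F}_q$-linear synchronization layer so that $\mathbb{F}_q$-indels can be converted back into a bounded number of $\mathbb{F}_{q^2}$-indels and passed to the half-linear decoder. But you explicitly flag the design of that layer as ``the main obstacle'' and leave it unresolved, and that is exactly where the proof lives; without it you have only the reduction scaffolding. The paper's layer is concrete and simple: after flattening, insert a pair of zeros after every $2\ell$ symbols (the map $\textup{Pad}_\ell$). This is trivially $\mathbb{F}_q$-linear and costs a $\tfrac{1}{\ell+1}$-fraction of the length. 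Zeros work as delimiters because of a structural property of $C^{\textup{HL}}$ that your plan does not exploit: each $\mathbb{F}_{q^2}$-symbol has the form $(c_i,s_ic_i)$ with $s_i\neq 0$, so in the flattened string every run of zeros already has even length (\Cref{clm:double-zero-prop}). The decoder segments the received word at maximal zero runs, discards any nonzero window of odd length or length $>2\ell$, pairs up each surviving window, and feeds the resulting list of pairs to the half-linear decoder. The combinatorial lemma is that a single $\mathbb{F}_q$-indel corrupts at most $\ell$ pairs (one indel can make a window's length odd; two deletions can erase a delimiter and merge two windows into one of length $>2\ell$), so $\delta N$ indels become at most about $2(\ell+1)\delta\,n$ pair-indels.

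This also means your parameter split is off. Setting $\delta_1=2\sqrt{\delta}$ at the half-linear level already consumes the entire rate budget and leaves nothing for the synchronization overhead you yourself say must cost $O(\sqrt{\delta})$. In the paper the half-linear code is set to correct a $2(\ell+1)\delta$-fraction of pair-indels, giving half-linear rate $\tfrac12-2(\ell+1)\delta-\varepsilon$, and the padding multiplies this by $\tfrac{\ell}{\ell+1}$. Choosing the integer $\ell$ so that $\ell+1\approx\tfrac{1}{2\sqrt{\delta}}$ balances the two losses (each $\approx\sqrt{\delta}$) and yields $\tfrac12-2\sqrt{\delta}-\varepsilon$; the hypothesis $\delta<1/16$ is what makes such an $\ell\ge 1$ exist.
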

A graphical comparison of our work with~\cite{CST22} can be seen in \Cref{fig:combined-rate}. A key advantage of this paper over \cite{CST22} is that the rate of our linear codes can be arbitrarily close to $1/2$ while still correcting a constant fraction of indels. More formally, for any $\varepsilon > 0$ we construct efficient codes over alphabet of size $\Theta(\varepsilon^{-4})$ that have rate $1/2 - \varepsilon$ and can correct efficiently from $\Theta(\varepsilon^2)$ indels. 
We also note that in \cite{cheng2023linear} the authors constructed \emph{binary} linear codes
that have rate $1/2 - \varepsilon$ and can correct $\Theta(\frac{\varepsilon^{3}}{\log (1/\varepsilon)})$ indels. We leave it as an open question to construct, from our codes, a binary linear code with greater correction capability.

\subsection{Structure of the Paper}
The remainder of this paper is organized as follows.
In Section~\ref{sec:Preliminaries}, we provide preliminaries and background material necessary for our results.
In Section~\ref{sec:Construction of Half-Linear Codes}, we construct our half-linear code and prove Theorem~\ref{thm:half-linear-code} and in Section~\ref{sec:From Half-Linear Codes to Linear Codes}, we construct linear code that prove Theorem~\ref{thm:full linear results}. 
In Section~\ref{sec:abdel_extension} we prove Theorem~\ref{thm:half_singleton_base}, which establishes that the half-Singleton bound holds also for codes that are linear over a subfield.
Lastly, in Section~\ref{sec:Open_Problems_and_Future_Directions}, we discuss open problems and future research directions.

\section{Preliminaries}\label{sec:Preliminaries}
Throughout this work, \( q \) will denote a power of a prime, and \( \mathbb{F}_q \) will represent the finite field with \( q \) elements. All the codes constructed in this work will be defined over alphabets that are finite fields. Specifically, the codes will be constructed over the fields \( \mathbb{F}_{q^2} \) and \( \mathbb{F}_q \). 
Also, a vector will be denoted in bold, and its components will be indexed using subscripts. The position of a component in the sequence is referred to as its index. For example, the vector \( \bfc \) is represented as \( \bfc\) \(= (c_1, \dots, c_n) \), where \( c_i \) denotes the component of \( \bfc \) at index \( i \). We denote by \( \bfc [i, j] \) the subvector of \( \bfc \) consisting of the elements between indices \( i \) and \( j \), inclusive. Brackets \([ \, ]\) indicate inclusion of the endpoint, while parentheses \(( \, )\) indicate exclusion.
Throughout this paper, we shall move freely between representations of vectors as strings and vice versa. Namely, we view each vector $\bfv=(v_1, \ldots, v_n)\in \Fq^n$ also as a string by concatenating all the symbols of the vector into one string, i.e., $(v_1, \ldots, v_n) \leftrightarrow v_1 \circ v_2 \circ \cdots \circ v_n$. 

\subsection{Linear codes}
For a code \( C \subseteq \mathbb{F}_q^n \), \( d_\textup{H}(C) \) denotes the Hamming distance of the code, and \( R(C) \) denotes the rate of the code, defined as
\(
R(C) = \frac{\log_q(|C|)}{n}.
\)
As in \cite{CST22}, we will use AG-codes to construct our linear codes capable of correcting indels. These codes are defined over large (but constant) fields and are capable of correcting erasure and substitution errors efficiently.
\begin{thm}[AG codes \cite{tsfasman1982modular,skorobogatov1990decoding,kotter1996fast,stichtenoth2009algebraic}] \label{thm:AG-code}
Let $q = p^{2m}$ where $p$ is a prime and $m$ is a positive integer, and let $\delta \in (0, 1 - \frac{1}{\sqrt{q} - 1})$. There exists an explicit family of linear codes $\{C_i\}_{i=1}^{\infty}$ over $\mathbb{F}_q$ of lengths $\{n_i\}_{i=1}^{\infty}$ where $n_i\to\infty$ as $i\to\infty$, with minimal normalized Hamming distance $\delta$ and rate 
\(R(C_i) \geq 1 - \delta -\frac{1}{\sqrt{q} - 1}\).
Moreover, there is an efficient decoding algorithm that corrects, in time $O(n_i^3)$, $s$ substitutions and $e$ erasures, where $2s+e < \left(\delta - \frac{1}{\sqrt{q}-1}\right)\cdot n_i$.

\end{thm}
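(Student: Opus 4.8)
The plan is to realize these codes as algebraic--geometry (Goppa) codes on an \emph{explicit} tower of algebraic function fields over $\mathbb{F}_q$ whose ratio of rational places to genus tends to the Drinfeld--Vl\u{a}du\c{t} bound $\sqrt q-1$. Since $q=p^{2m}$ is a perfect square, $\sqrt q=p^m$ is an integer and the Garcia--Stichtenoth towers supply, with explicit defining equations, a chain of function fields $F_1\subset F_2\subset\cdots$ over $\mathbb{F}_q$ with genera $g_i\to\infty$ and at least $N_i$ rational places, where $N_i/g_i\to\sqrt q-1$. For each $i$ I would fix $n_i\le N_i$ of these places $P_1,\dots,P_{n_i}$, set $D=P_1+\cdots+P_{n_i}$, choose a divisor $G$ with support disjoint from $\{P_j\}$ and $\deg G=\lfloor(1-\delta)n_i\rfloor$, and take
\[
C_i \;=\; C_{\mathcal L}(D,G)\;=\;\bigl\{\,(f(P_1),\dots,f(P_{n_i})) \;:\; f\in\mathcal L(G)\,\bigr\}\subseteq\mathbb{F}_q^{\,n_i}\,.
\]

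First I would pin down the parameters. A nonzero $f\in\mathcal L(G)$ has at most $\deg G$ zeros among the $P_j$, so $d_{\mathrm H}(C_i)\ge n_i-\deg G\ge\delta n_i$; and since $n_i$ can be taken proportional to $g_i$ along the tower we have $\deg G>2g_i-2$ for $i$ large, whence Riemann--Roch gives $\dim C_i=\ell(G)=\deg G-g_i+1$, so that
\[
R(C_i)=\frac{\deg G-g_i+1}{n_i}\ge 1-\delta-\frac{g_i}{n_i}\,.
\]
Taking $n_i$ proportional to $N_i$ makes $g_i/n_i\to\frac1{\sqrt q-1}$, giving $R(C_i)\ge 1-\delta-\frac1{\sqrt q-1}$ for all large $i$; note that the hypothesis $\delta<1-\frac1{\sqrt q-1}$ is exactly what makes this rate positive. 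Explicitness is inherited from the explicit equations of the tower together with the effective computation of bases of $\mathcal L(G)$.

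Next I would treat decoding. Erasures reduce to substitutions in the standard way: puncturing at an erasure set $E$ of size $e$ yields the AG code $C_{\mathcal L}(D-\sum_{j\in E}P_j,\,G)$, whose designed distance is $n_i-e-\deg G$; so after correcting one restores the erased symbols, and it suffices to correct $s$ substitutions in a code whose designed distance is $n_i-e-\deg G$. For that I would invoke the classical error-locator/syndrome decoder for AG codes (equivalently, the error-correcting-pairs algorithm): it amounts to solving linear systems over $\mathbb{F}_q$ of size $O(n_i)$, hence runs in time $O(n_i^3)$, and its guaranteed correction radius is $\lfloor(d^\ast-1-g_i)/2\rfloor$, where $d^\ast$ is the designed distance. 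Plugging in $d^\ast=n_i-e-\deg G$ and $\deg G=\lfloor(1-\delta)n_i\rfloor$, correction is guaranteed whenever $2s+e\le\delta n_i-1-g_i$, and since $g_i/n_i\to\frac1{\sqrt q-1}$ this is implied, for all large $i$, by the stated bound $2s+e<(\delta-\frac1{\sqrt q-1})n_i$. In particular the genus loss is precisely the slack already built into the target bound, so the basic algorithm suffices and no Feng--Rao majority-voting refinement is needed.

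The main obstacle is not the code construction or the decoding analysis---by now those are routine Riemann--Roch bookkeeping and linear algebra---but the existence of the \emph{explicit} curve family meeting the Drinfeld--Vl\u{a}du\c{t} bound. The original Tsfasman--Vl\u{a}du\c{t}--Zink argument via reductions of modular curves is nonconstructive and does not yield equations one can compute with; the fix is to use the Garcia--Stichtenoth towers, importing as black boxes their explicit tower of equations, their exact genus formula, their count of rational places (which gives $N_i/g_i\to\sqrt q-1$), and the effective computability of the associated Riemann--Roch spaces, so that every step of the encoder/decoder pipeline is genuinely polynomial time.
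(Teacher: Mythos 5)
Your proposal is correct and is essentially the argument the paper relies on: this theorem is imported as a black box from the cited works, which obtain it exactly as you do, via algebraic-geometry codes on explicit towers attaining the Drinfeld--Vl\u{a}du\c{t} bound (Garcia--Stichtenoth in place of the nonconstructive modular-curve argument) together with the basic syndrome/error-locator decoder of Skorobogatov--Vl\u{a}du\c{t}, whose genus loss is precisely the $\frac{1}{\sqrt{q}-1}$ slack in the stated decoding radius. One small inaccuracy: your claim that $\deg G>2g_i-2$ for large $i$ need not hold when $\delta$ is close to $1-\frac{1}{\sqrt{q}-1}$, but this is harmless since only Riemann's inequality $\ell(G)\ge \deg G-g_i+1$ (valid for any $G$, together with $\deg G<n_i$ for injectivity of evaluation) is needed for the rate bound.
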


The authors of \cite{CST22} introduced the concept of half-linear codes, which we will also utilize in this work.
\begin{defn}[\cite{CST22}]
Let \( \mathbb{F}_q \) be a finite field. A code \( C \) over \( \mathbb{F}_{q^2} \) is called \emph{half linear} if it is closed under addition and scalar multiplication by elements of \( \mathbb{F}_q \).

\subsection{Levenshtein distance and self-matching sequences}
\end{defn}
Levenshtein introduced a metric between sequences, which we will utilize in the constructions presented in this work.
\begin{defn}
The \emph{Levenshtein distance} between two sequences \( \bfx \) and \( \bfy \), denoted by \( D_L(\bfx, \bfy) \), is the minimum number of insertions and deletions (indels), required to transform one sequence into the other.
\end{defn}

Our next definition, strongly related to the Levenshtein distance, is that of a longest common subsequence of two sequences. 
\begin{defn}\label{def:lcs}
Let \(\bfa = (a_1, \dots, a_m) \) and 
\( \bfb = (b_1, \dots, b_n) \) be two sequences. A \emph{Longest Common Subsequence (LCS)} of $\bfa$ and $\bfb$, denoted as $\textup{LCS}(\bfa, \bfb)$, is a sequence \( \bfc = (c_1, \dots, c_k) \) of maximal length such that \( \bfc \) is a subsequence of both \( \bfa \) and \( \bfb \). In other words, there exist indices \( 1 \leq i_1 < i_2 < \dots < i_k \leq m \) and \( 1 \leq j_1 < j_2 < \dots < j_k \leq n \) such that \( c_{\ell} = a_{i_{\ell}} = b_{j_{\ell}} \) for all \( 1 \leq \ell \leq k \).
\end{defn}

It is well known that $D_L(\bfa, \bfb) = |\bfa| + |\bfb| - 2|\textup{LCS}(\bfa, \bfb)|$ where $|\cdot|$ refers to the length of a sequence.
Our next definition is that of synchronization sequences. It was introduced in \cite{haeupler-survey2021synchronization} and served as a key ingredient in their construction of almost optimal nonlinear codes over large (but constant) alphabets correcting indels. 
\begin{defn}
A sequence \( \bfs \) of length \( n \) is called a \emph{\( \tau \)-self-matching sequence} if for every triple of indices \( 1 \leq i < j < k \leq n+1 \), it holds that \(
D_L(\bfs[i,j), \bfs[j,k)) > (1 - \tau)(k - i).
\)
\end{defn}
Our codes in the paper will also use synchronization sequences. The following theorem states that one can construct synchronization sequences in polynomial time over small (depending on $\tau$) alphabets.
\begin{thm}[Theorem 1.2, \cite{DBLP:conf/soda/ChengHLSW19}]
For every natural number \( n \) and every \( \tau \in (0,1) \), there exists a polynomial-time algorithm that constructs a \( \tau \)-self-matching sequence over an alphabet of size \( O(\tau^{-2}) \).
\end{thm}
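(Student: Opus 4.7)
The approach is the probabilistic method together with an algorithmic Lovász Local Lemma (LLL). First I would rewrite the self-matching condition in terms of longest common subsequences: applying $D_L(\bfa,\bfb)=|\bfa|+|\bfb|-2\,|\LCS(\bfa,\bfb)|$ with $\bfa=\bfs[i,j)$ and $\bfb=\bfs[j,k)$, and writing $L:=k-i$, the requirement $D_L(\bfs[i,j),\bfs[j,k))>(1-\tau)L$ becomes
$$|\LCS(\bfs[i,j),\bfs[j,k))|<\tfrac{\tau}{2}\,L.$$
Hence it suffices to exhibit $\bfs\in\Sigma^{n}$ over an alphabet with $q:=|\Sigma|=\Theta(\tau^{-2})$ satisfying this bound for every triple $i<j<k$.

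The core estimate is a concentration bound in the random-string model. Sample each coordinate of $\bfs$ independently and uniformly from $\Sigma$, set $L_1:=j-i$, $L_2:=k-j$, and $\ell:=\lceil\tau L/2\rceil$. A common subsequence of length $\ell$ corresponds to a choice of $\ell$ matched positions in each substring, each pair agreeing with probability $q^{-1}$. By linearity of expectation the expected number of such matchings is at most
$$\binom{L_1}{\ell}\binom{L_2}{\ell}\,q^{-\ell}\le\left(\frac{e^{2}L_1L_2}{\ell^{2}\,q}\right)^{\ell}\le\left(\frac{e^{2}}{\tau^{2}q}\right)^{\ell},$$
using $\binom{n}{k}\le(en/k)^{k}$, $L_1L_2\le L^{2}/4$, and $\ell\ge\tau L/2$. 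Choosing $q\ge 2e^{2}\tau^{-2}$ and applying Markov's inequality, the bad event $B_{i,j,k}:=\{|\LCS(\bfs[i,j),\bfs[j,k))|\ge\ell\}$ satisfies $\Pr[B_{i,j,k}]\le 2^{-\ell}\le 2^{-\tau L/2}$.

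To turn this into an efficient construction I would apply the asymmetric LLL. Since $B_{i,j,k}$ is determined by the symbols in $[i,k)$, it is mutually independent of every $B_{i',j',k'}$ with $[i',k')\cap[i,k)=\emptyset$. Partitioning the overlapping events by the dyadic scale $L'\in[2^{m},2^{m+1})$, there are $O((L+2^{m})\cdot 2^{m})$ such events at scale $m$, each of probability at most $2^{-\tau 2^{m}/2}$. Assigning LLL weights $x(B_{i,j,k})=\alpha\cdot 2^{-\tau L/4}$ for an appropriately small constant $\alpha$, the sum $\sum_{B'\in N(B)}(-\ln(1-x(B')))$ is dominated by a convergent geometric series in $m$ and remains below $-\ln(\Pr[B]/x(B))$, provided the hidden constant in $q=\Theta(\tau^{-2})$ is inflated by a fixed factor. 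This gives existence, and the Moser-Tardos resampling framework upgrades it to a polynomial-time randomized algorithm: while some $B_{i,j,k}$ is violated, resample the coordinates in $[i,k)$; a standard witness-tree analysis bounds the number of resamplings, and each violation check is a single $O(n^{2})$ LCS dynamic program. A deterministic polynomial-time algorithm follows from the derandomization of Moser-Tardos via conditional probabilities on the witness-tree potential.

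The main technical obstacle is the regime of very short windows $L=O(1/\tau)$, where the bound $2^{-\tau L/2}$ is only a mild constant and the dependency neighbourhood of $B_{i,j,k}$ threatens to overwhelm the LLL criterion. This is exactly why the weights must decay exponentially with the window length and why the dyadic bookkeeping is essential; a symmetric LLL does not suffice because the neighbourhood grows with $L$. In the sharpest versions one either treats the $O(1)$-length constraints deterministically (e.g.\ forbidding equal consecutive symbols during sampling) or carries out the LLL analysis scale by scale, at the cost of enlarging $q$ by at most a constant factor while preserving the $O(\tau^{-2})$ alphabet bound.
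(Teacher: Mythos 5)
This theorem is not proved in the paper under review --- it is imported verbatim from the reference \cite{DBLP:conf/soda/ChengHLSW19}, so there is no internal proof to compare your attempt against. Judged on its own merits: the reduction to $|\LCS|<\tau L/2$, the first-moment bound $\binom{L_1}{\ell}\binom{L_2}{\ell}q^{-\ell}\le(e^{2}/(\tau^{2}q))^{\ell}$, and the conclusion $\Pr[B_{i,j,k}]\le 2^{-\ell}$ for $q\ge 2e^{2}\tau^{-2}$ are all correct, and you correctly identify that the short-window regime is where the trouble lies. But the LLL bookkeeping cannot be closed at alphabet size $\Theta(\tau^{-2})$ with the ingredients you describe, and the fixes you mention do not remove the obstruction.

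Concretely, for $L'\le 2/\tau$ we have $\ell'=1$, so $B_{i',j',k'}$ is simply the event that $\bfs[i',j')$ and $\bfs[j',k')$ share a symbol, with probability $\Theta(L_{1}'L_{2}'/q)$. Take any event $B$ at scale $L=O(1)$ and sum $\Pr[B']$ over all overlapping $B'$ of scale $L'\le 2/\tau$: there are about $L+L'$ choices of $i'$, and $\sum_{j'}\Pr[B_{i',j',k'}]\approx L'^{3}/(6q)$, so $\sum_{B'\in\Gamma(B)}\Pr[B']=\Theta\!\left(\sum_{L'\le 2/\tau}(L+L')L'^{3}/q\right)=\Theta(1/(C\tau^{3}))$ when $q=C\tau^{-2}$. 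Any valid LLL weight assignment has $x(B')\ge\Pr[B']$ and must satisfy $\sum_{B'\in\Gamma(B)}(-\ln(1-x(B')))\le\ln(x(B)/\Pr[B])$; since $-\ln(1-x)\ge x$ and $x(B)\le 1$, this forces $\sum_{B'}\Pr[B']\le\ln(1/\Pr[B])=O(\log(C/\tau^{2}))$. The left side scales like $\tau^{-3}/C$ while the right side is only logarithmic in $1/\tau$, so no \emph{fixed} constant $C$ can work --- the condition forces $q$ to grow by a polynomial factor in $1/\tau$, not a constant. Grouping all splits $j'$ of a window $[i',k')$ into a single event (``$\bfs[i',k')$ has a repeated symbol'') improves the neighbourhood sum to $\Theta(1/(C\tau^{2}))$ but still forces $q$ on the order of $\tau^{-4}$; and forbidding equal \emph{consecutive} symbols, as you suggest, removes only the $L'=2$ events and leaves the entire range up to $L'\approx 2/\tau$ intact. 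The assertion that the short-window issue ``costs at most a constant factor'' in $q$ is therefore the gap: the LLL/Moser--Tardos route you sketch does give existence and an efficient construction, but over alphabet size polynomial in $1/\tau$ with a worse exponent than $2$. The quadratic bound in the cited SODA~'19 paper is obtained by a dedicated deterministic construction, not by tightening the LLL constants.
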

The following corollary is just a translation of the previous theorem to the terminology of finite fields.  
\begin{cor}\label{cor:sync-exist}
Let \( \mathbb{F}_q \) be a finite field, and let \( n \) be a natural number.  
Then there exists a polynomial-time algorithm (in \( n \)) that constructs a \( \Theta\left(\frac{1}{\sqrt{q}}\right) \)-self-matching sequence of length \( n \), where all elements of the sequence are nonzero, i.e., they belong to \( \mathbb{F}_q^* \).
\end{cor}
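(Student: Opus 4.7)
The plan is to derive this corollary directly from the cited theorem of Cheng, Haeupler, Li, Sabharwal and Wu by an alphabet-size balancing step followed by a symbol relabeling. The only two things that need attention are (i) matching the alphabet-size parameter of their construction with the requirement that the resulting symbols live in $\mathbb{F}_q^*$, and (ii) checking that the self-matching property is invariant under injective relabeling.

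First I would fix a constant $C > 0$ so that the theorem guarantees, for every $\tau \in (0,1)$, a $\tau$-self-matching sequence of length $n$ over an alphabet of size at most $C\tau^{-2}$, constructible in $\mathrm{poly}(n)$ time. Setting $\tau := \sqrt{C/(q-1)} = \Theta(1/\sqrt{q})$ makes the alphabet size at most $q-1$, so the theorem produces (in polynomial time) a $\tau$-self-matching sequence $\bft$ of length $n$ over some abstract alphabet $\Sigma$ with $|\Sigma|\le q-1$.

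Next I would fix any injection $\varphi\colon \Sigma \hookrightarrow \mathbb{F}_q^*$ (possible because $|\Sigma|\le q-1 = |\mathbb{F}_q^*|$) and output the sequence $\bfs := (\varphi(t_1),\ldots,\varphi(t_n))$. All of its entries lie in $\mathbb{F}_q^*$ by construction, and the map is computable in $O(n)$ time, preserving the overall polynomial running time.

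The remaining step is to argue that $\bfs$ is still $\tau$-self-matching. This is where I would be explicit but brief: for any sequences $\bfu,\bfv$, the length of $\mathrm{LCS}(\bfu,\bfv)$, and hence $D_L(\bfu,\bfv)$, depends only on the equality-pattern of symbols within and across $\bfu$ and $\bfv$, and any injection $\varphi$ preserves this pattern. Applying this observation to $\bfu = \bfs[i,j)$ and $\bfv = \bfs[j,k)$ (the images under $\varphi$ of $\bft[i,j)$ and $\bft[j,k)$) yields $D_L(\bfs[i,j),\bfs[j,k)) = D_L(\bft[i,j),\bft[j,k)) > (1-\tau)(k-i)$ for all $1\le i<j<k\le n+1$, so $\bfs$ inherits the $\tau = \Theta(1/\sqrt{q})$ self-matching property, completing the corollary. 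There is no serious obstacle here; the only delicate point is the constant-tracking to guarantee that the alphabet produced by the black-box theorem fits inside $\mathbb{F}_q^*$ rather than $\mathbb{F}_q$, which is the reason we take $\tau = \sqrt{C/(q-1)}$ instead of $\sqrt{C/q}$.
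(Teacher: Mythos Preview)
Your proposal is correct and matches the paper's intent: the paper does not give an explicit proof, merely remarking that the corollary ``is just a translation of the previous theorem to the terminology of finite fields,'' and your argument (choose $\tau=\Theta(1/\sqrt{q})$ so the alphabet fits in $\mathbb{F}_q^*$, then injectively relabel, noting that $D_L$ is preserved under injective relabeling) is exactly that translation.
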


\subsection{The Matching Algorithm of \cite{haeupler-survey2021synchronization}}\label{sec:hs-match}
For completeness, we give a brief overview of the algorithm \cite[Algorithm 1]{haeupler-survey2021synchronization} and its analysis. Throughout this paper, we call this algorithm \texttt{Match} which is given in \Cref{alg:match}.

\begin{algorithm}
\caption{\texttt{Match} \cite[Algorithm 1]{haeupler-survey2021synchronization}}
\begin{algorithmic}[1]
\Require $s$, $(c'_1, s'_1), \cdots, (c'_{m}, s'_{m})$
\Ensure $\bfy \in (\Fq \cup \{?\})^n$

\State $\bfs' \gets (s'_1,\ldots,s'_m)$.
\State $\textup{Pos} = (\perp, \ldots,\perp)$
\State $\bfy \gets (?,\ldots,?)$
\For {$i = 1$ to  $\floor{\frac{1}{\sqrt{\tau}}}$}
    \State Compute $\LCS(\bfs, \bfs')$ \label{lin:match-lcs-comp}
    \ForAll{Corresponding $s[i]$ and $\bfs'[j]$ in $\LCS(s, \bfs')$}
        \State $\textup{Pos}_j \gets i$ \label{lin:match-guess-pos}
    \EndFor
    \State Remove all elements of $\LCS(\bfs, \bfs')$ from $\bfs'$ \label{lin:match-remove-lcs}
\EndFor

\For{$i=1$ to $n$}\label{lin:match-actual-match}
    \If{ $|\{j \mid \textup{Pos}_j = i\}| = 1$}
        \State $y_i \gets c'_j$ for that unique $j$ where $\textup{Pos}_j = i$.
    \EndIf
\EndFor
\end{algorithmic}
\label{alg:match}
\end{algorithm}

Let $\bfs = s_1s_2\ldots s_n$ be a $\tau$-self-matching sequence. \Cref{alg:match} receives as input the self-matching string $\bfs$ and a vector of the form \(
((c'_1, s'_1), \dots, (c'_m, s'_m)) 
\)
which is obtained from $((c_1, s_1),\ldots, (c_n,s_n))$ after performing $\delta n$ indels.
The algorithm attempts to determine for each \( c'_i \) where it is located in the original word.
In line~\ref{lin:match-lcs-comp}, it computes the LCS of \( (s'_1, \dots, s'_m) \) with the self-matching string \(\bfs \). The LCS defines a correspondence between elements of the sequence \( (s'_1, \dots, s'_m) \) and the sequence \(\bfs\). If we match \( s'_i \) with \( s_j \), we are essentially guessing that \( c_i \) should be at position \( j \) (this happens in line~\ref{lin:match-guess-pos}). Then, on line~\ref{lin:match-remove-lcs}, the algorithm removes all the symbols from $s_1'\cdots s_m'$ that were matched. This process, computing the LCS, matching positions, and then removing the matched symbols, is repeated for $\floor{1/\sqrt{\tau}}$ times.

After the first loop, the vector $\textup{Pos}$ contains all the matches that were performed. In \cite[Lemma 2.2.]{haeupler-survey2021synchronization}, the authors proved that out of all the symbols that were \emph{not} deleted and sent properly, only at most $O(\sqrt{\tau} n)$ are matched incorrectly and furthermore, only at most $O(\sqrt{\tau} n)$ are not matched after the loop. However, note that there is a possibility that two symbols, say $s'_i$ and $s'_j$, are matched to the same $s_k$. 
More specifically, for every $j\in [n]$ there are two possibilities:
\begin{enumerate}
    \item Exactly one element is mapped to it,
    \item Zero or two or more elements are mapped to it.
\end{enumerate}
In the second loop in the algorithm, the actual matching is performed according to the following simple rule. For each $j\in [n]$, if exactly one element is mapped to it, then set $y_j$ to the respective matched code symbol and otherwise set $y_j$ to `$?$'.

The analysis in \cite[Theorem 2.3]{haeupler-survey2021synchronization} also describes how indels that are performed to $((c_1, s_1),\ldots, (c_n,s_n))$ affect the Hamming distance between $\bfc = (c_1, \ldots, c_n)\in \Fq^n$ and the output of the algorithm, $\widetilde{\bfy} = (y_1, \ldots, y_n)\in (\Fq\cup \{?\})^n$. 
Clearly, if no error is made and all the symbols are matched correctly, $\bfc = \bfy$. 
It was shown in~\cite{haeupler-survey2021synchronization} that
\begin{enumerate}
    \item Every deletion of a symbol, at the worst case, transforms a correctly matched symbol in $\bfy$ into a `$?$'.
    \item An insertion of a symbol, at the worst case, can turn a a correctly matched symbol into `$?$' or `$?$' into a substitution.
    \item The number of substitutions caused by wrong match plus the number of unmatched sent symbols is bounded by $O(\sqrt{\tau}n)$.
\end{enumerate}

Taking into consideration all of the above imperfections, \cite{haeupler-survey2021synchronization} proved the following statement.
\begin{lem}\cite[Lemma 2.2 and Theorem 2.3]{haeupler-survey2021synchronization} \label{lem:HS-lemma}
    Let $s_1\ldots s_n$ be a $\tau$-self-matching string. Let $\bfc = (c_1,\ldots,c_n)\in \Fq^n$ be a vector and assume that $((c'_1,s'_1), \ldots, (c'_m,s'_m))$ was obtained from $((c_1,s_1),\ldots, (c_n, s_n))$ after performing $\delta n$ indel errors.

    Then, applying the algorithm \texttt{Match} on $((c'_1,s'_1), \ldots, (c'_m,s'_m))$ outputs $\bfy = (y_1,\ldots,y_n)\in (\Fq \cup\{?\})^n$ such that $\bfy$ can be obtained from $\bfc$ by performing at most $e$ erasures and $t$ substitutions where $e + 2t \leq (\delta + 12\sqrt{\tau})n$.
\end{lem}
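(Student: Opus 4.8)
The plan is to carefully unwind the operation of \texttt{Match} and track how each of the $\delta n$ indel operations can corrupt the output, essentially following the bookkeeping of \cite{haeupler-survey2021synchronization}. First I would set up the notation: among the $m$ symbols $(c'_i,s'_i)$ fed to the algorithm, classify them as \emph{inserted} (not present in the original stream) or \emph{surviving} (a genuine copy of some $(c_j,s_j)$ that was not deleted). Call a surviving symbol \emph{good} if its \texttt{Pos} entry ends up equal to the true original index $j$, and \emph{bad} otherwise (wrongly matched or left with $\perp$). The key input from \cite{haeupler-survey2021synchronization} (their Lemma 2.2, which I would invoke rather than reprove) is the self-matching guarantee: after the $\floor{1/\sqrt{\tau}}$ rounds of LCS-peeling, the number of bad surviving symbols is $O(\sqrt{\tau}\,n)$ — concretely the constant works out so that the total slack is $12\sqrt{\tau}\,n$. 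I would state this as the one black box I rely on.

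The main work is then the second loop, which converts the \texttt{Pos} guesses into $\bfy\in(\Fq\cup\{?\})^n$, and the accounting of $e$ (erasures, i.e. coordinates set to `$?$') and $t$ (substitutions, i.e. coordinates where $y_j\ne c_j$ but $y_j\ne{}$`$?$'). The clean way is to charge each imperfection to a specific indel or to a bad surviving symbol. I would argue: (1) a coordinate $j$ becomes `$?$' only if it received zero or $\ge 2$ \texttt{Pos}-votes; if it received zero, either $c_j$ was deleted (charge to that deletion) or the surviving copy of $c_j$ was left unmatched (charge to a bad symbol); if it received $\ge 2$, at least one extra voter is either an inserted symbol (charge to that insertion) or a bad surviving symbol. (2) A coordinate $j$ becomes a substitution only if exactly one symbol voted for $j$ and that symbol is either inserted (charge to the insertion) or a bad surviving symbol (charge to it). So every contribution to $e$ or $t$ is charged either to one of the $\delta n$ indel operations or to one of the $O(\sqrt{\tau}\,n)$ bad surviving symbols, and each indel/bad symbol is charged $O(1)$ times. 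Tuning the constants exactly as in \cite[Theorem 2.3]{haeupler-survey2021synchronization} — noting that a deletion can only turn a correct symbol into `$?$' (weight $1$ toward $e$), while an insertion can in the worst case both erase one symbol and upgrade a `$?$' into a substitution (weight $2$ toward $e+2t$), and each bad symbol costs at most one unit of $t$ hence weight $2$ — yields $e + 2t \le \delta n + 12\sqrt{\tau}\,n = (\delta + 12\sqrt{\tau})n$.

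The step I expect to be the genuine obstacle is pinning down the $O(\sqrt{\tau}\,n)$ bound on bad surviving symbols with the explicit constant $12$ after passing through the $e+2t$ weighting — i.e. reconstructing precisely how \cite{haeupler-survey2021synchronization} get their constant through the $\floor{1/\sqrt{\tau}}$ peeling rounds and the self-matching property $D_L(\bfs[i,j),\bfs[j,k)) > (1-\tau)(k-i)$. Since the excerpt explicitly packages this as \cite[Lemma 2.2 and Theorem 2.3]{haeupler-survey2021synchronization}, the honest approach here is to cite it and present only the charging argument above as the proof, rather than re-deriving the self-matching combinatorics from scratch.
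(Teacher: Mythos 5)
Your approach --- cite \cite{haeupler-survey2021synchronization} for the LCS-peeling combinatorics and run a charging argument on top of it --- is exactly what the paper does: the lemma is stated as a cited result, and the paper's ``proof'' is only the same qualitative bookkeeping (Section~\ref{sec:hs-match}) that you are reproducing. So the route is the same.

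There is, however, an arithmetic slip in your charging that, taken literally, gives a weaker bound than claimed. You assign each insertion weight $2$ towards $e+2t$, saying it can ``both erase one symbol and upgrade a `?' into a substitution.'' But a single insertion contributes one extra \texttt{Pos}-vote and therefore touches one coordinate $j$: either it turns a previously unique correct vote at $j$ into a doubled vote, so a correct symbol becomes `?' ($+1$ to $e$, hence $+1$ to $e+2t$), or it gives a previously vote-free coordinate $j$ its single wrong vote, so an existing `?' becomes a substitution ($-1$ to $e$, $+1$ to $t$, net $+1$ to $e+2t$). In either case the marginal cost of one insertion is $+1$, not $+2$; it cannot do both at once, and the ``upgrade to substitution'' transition is a net $+1$ precisely because the `?' being upgraded was already charged elsewhere. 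With your weights the indel contribution would come out as $2\delta n$ rather than $\delta n$, giving $e+2t \le 2\delta n + 12\sqrt{\tau}\,n$, which does not match the statement. The correct bookkeeping --- the one the paper sketches --- is that every indel contributes at most $+1$ to $e+2t$, and the $12\sqrt{\tau}\,n$ slack absorbs the $O(\sqrt{\tau}\,n)$ wrongly-matched or unmatched surviving symbols from Lemma~2.2 of \cite{haeupler-survey2021synchronization}, each of which costs at most $2$. With that one fix to the per-insertion weight, the rest of your charging argument goes through.
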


\section{Half-Linear Codes}\label{sec:Construction of Half-Linear Codes}

Since our construction of half-linear codes is highly inspired by~\cite[Construction 2.2]{CST22}, we begin with a brief description of that construction.
\subsection{The Linearization of \cite{CST22}}
\label{sec:half-linear-con-tamo-shpilka}
We first recall the construction of \cite{haeupler2017synchronization}.
Let \(C^{\textup{H}} \subseteq \mathbb{F}_q^n\) be a code capable of correcting substitutions and erasures. Then, define the code
\[
    C^{\text{ID}} = \{ (c_1, s_1), \dots, (c_n, s_n) \mid \bfc \in C^{\textup{H}}\}\;,
\]
where $\bfs = (s_1, \ldots, s_n)$ is a $\tau$-self-matching sequence over the alphabet $\Sigma_S$. This code, with a careful choice of $C^{\textup{H}}$, achieves a rate of $1-\delta - \tau$. Clearly, this is not a linear code.

Then, \cite{CST22} \emph{linearized} this code by turning each pair $(c_i,s_i) \in \Fq \times \Sigma_S$ into $(c_i, s_i\cdot c_i)\in \Fq \times \Fq$. Specifically, their code is defined as follows. 
  \begin{const}[Construction 2.2, \cite{CST22}] \label{const:cst22-const}
      Let $\delta_H>0$ and let $\tau$ be a small enough constant. Let $p$ be a prime such that $p=\Theta(\tau^{-2})$ and set $q = p^2 = \Theta(\tau^{-4})$. Define $C^{\textup{H}}$ to be a code of length $n$ that belongs to the family from \Cref{thm:AG-code} with normalized Hamming distance $\delta_H$ which has rate $\mathcal{R}(C^{\textup{H}}) = 1 - \delta_H - \tau$.
      Let $\bfs=s_1\ldots s_n$ be an $(\tau/24)^2$-self-matching string over $\Fq^{*}$. Define the code $C^{\textup{HL}}$ as
        \begin{equation}    \label{def:half-linear-code}
            C^{\textup{HL}} = \left\{ (c_1, s_1 \cdot c_1),  \dots, (c_n, s_n\cdot c_n) \mid \bfc \in C^{\textup{H}} \right\}\;.
        \end{equation}
  \end{const}

Note that this code is linear over $\Fq$ but not over $\Fq^2$, the finite field over which it is defined. 

\begin{rem}
    Remember that the self-matching string is the same in every codeword, and thus these symbols do not carry information. Now, note that in $C^{\textup{ID}}$, the alphabet size of the self-matching string can be made as small as we want compared to that of the code $C^{H}$. This allows one to achieve up to any small constant, the Singleton bound \cite{haeupler2017synchronization}.
    However, in $C^{\textup{HL}}$, since we are multiplying the self-matching symbol by the code symbol, and we want the code to preserve linearity (over $\Fq$), both of the symbols in the pair $(c_i, s_i \cdot c_i)$ are represented with $\log_2 q$ bits which implies a rate of at most $1/2$.
\end{rem}
  
The decoding algorithm of the code $C^{\textup{HL}}$ is given in \cite[Algorithm 1]{CST22} and is also provided here for completeness (see \Cref{alg:cst-decode-half-linear}). 


\begin{algorithm}
\caption{The decoding of $C^{\textup{HL}}$ according to \cite{CST22}}
\begin{algorithmic}[1]
\Require A word $\bfy \in (\mathbb{F}_{q^2})^{*}$.
\Ensure A message $\hat{m}\in \Fq^k$.
\vspace{0.5em}
\For{each nonzero symbol $y_i$ in $\bfy$} \label{lin:CST-zero-del}
    \State Extract $(c'_i,s'_i)$
\EndFor
\State Apply the \texttt{Match} algorithm to $((c'_1, s'_1), \ldots, (c'_m, s'_m))$ to obtain $\widetilde{\bfy} \in (\mathbb{F}_q \cup \{?\})^n$ \label{lin:CST-match-input}
\State Decode $\widetilde{\bfy}$ using the decoding algorithm for $C^{\textup{H}}$ to obtain $\hat{m}\in \Fq^k$
\end{algorithmic}
\label{alg:cst-decode-half-linear}
\end{algorithm}

We observe that \Cref{alg:cst-decode-half-linear} introduces ``new'' deletions, in line~\ref{lin:CST-zero-del} -- the zero symbols in $\bfy$ are deleted and not transformed to the \texttt{Match} algorithm nor to the decoder of $C^{\textup{H}}$. 
The reason for doing that in \cite{CST22} is that if a codeword of $C^{\textup{H}}$ contains a zero symbol, then the corresponding symbol in $C^{\textup{HL}}$ 
is $(0,0)$, and thus, we cannot extract the synchronization symbol.

We now argue about the rate-error-correction trade-off implied by this algorithm. Let $\bfc^{\textup{H}}\in C^\textup{H}$ and assume that it has $\zeta n$ zeros, where $\zeta\in [0,1]$.
Let $\bfc^{\textup{HL}} \in C^{\textup{HL}}$ be the corresponding transmitted codeword and let $\bfc^{\textup{ID}}\in C^{\textup{ID}}$ be the corresponding codeword in $C^{\textup{ID}}$. 
The vector $((c'_1, s'_1), \ldots, (c'_m, s'_m))$ in line~\ref{lin:CST-match-input} is obtained from $\bfc^{\textup{ID}}$ by performing $\delta n$ indels \emph{and} $\zeta n$ deletions.
Now, by \Cref{lem:HS-lemma}, $\widetilde{\bfy}$, obtained after running \texttt{Match}, is such that it can be obtained from $\bfc^{\textup{H}}$ after performing $e$ erasures and $t$ substitutions where $e + 2t \leq \delta n + \zeta n + \tau n/2$ (recall that the self-matching sequence has parameter $(\tau/24)^2$). 
Thus, if 
\[
\delta + \zeta + \tau/2 \leq \delta_H - \frac{1}{\sqrt{q} - 1} \;,
\]
then, the decoder of $C^{\textup{H}}$ succeeds, according to \Cref{thm:AG-code}. 
In \cite{CST22}, $\zeta$ was bounded by $1 - \delta_H$ and then by taking into consideration the parameters of \Cref{thm:AG-code} and that $\mathcal{R}(C^{\textup{HL}}) = \frac{1}{2}\mathcal{R}(C^{\textup{H}})$, we have that  $\mathcal{R}(C^{\textup{HL}}) = \frac{1}{4}(1- \delta) - \tau$.


\subsection{Decoding Algorithm}

We present our decoding algorithm, which is almost identical to \Cref{alg:cst-decode-half-linear}, yet the minor change in it gives a significant improvement in the rate-error-correction tradeoff. The algorithm is given in \Cref{alg:our-decode-half-linear} and its correctness is proved in \Cref{prop:half-lin-correcness}.

\begin{algorithm}
\caption{Our decoding algorithm for $C^{\textup{HL}}$}
\begin{algorithmic}[1]
\Require A word $\bfy \in (\mathbb{F}_{q^2})^{*}$.
\Ensure A message $\hat{m}\in \Fq^k$.
\vspace{0.5em}
\For{each nonzero symbol $y_i$ in $\bfy$} \ref{lin:CST-zero-del}
    \State Extract $(c'_i,s'_i)$
\EndFor
\State Apply the \texttt{Match} algorithm to $((c'_1, s'_1), \ldots, (c'_m, s'_m))$ to obtain $\widetilde{\bfy} \in (\mathbb{F}_q\in \{?\})^n$ \label{lin:our-match-input}
\State Replace all $?$ in $\widetilde{\bfy}$ with $0$ \label{lin:our-replace-questions}
\State Decode $\widetilde{\bfy}$ using the decoding algorithm for $C^{\textup{H}}$ to obtain $\hat{m}\in \Fq^k$ \label{lin:our-final-decoding}
\end{algorithmic}
\label{alg:our-decode-half-linear}
\end{algorithm}

\begin{prop} \label{prop:half-lin-correcness}
    Let $\tau,\delta > 0$ and let $\delta_H > 2\cdot( \delta + 12\tau)$. Let $C^{\textup{HL}}$ be the code defined in \Cref{const:cst22-const}.
    Assume that the codeword $\bfc^{\textup{HL}}\in C^{\textup{HL}}$ suffered from $\delta n$ indels and let $\bfy$ be the corrupted codeword. Then, on input $\bfy$, \Cref{alg:our-decode-half-linear} returns $\bfc$. Furthermore, the algorithm runs in time $O(n^3)$. 
\end{prop}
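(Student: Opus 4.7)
The plan is to trace how Algorithm~\ref{alg:our-decode-half-linear} converts the $\delta n$ adversarial indels into a pure substitution pattern that the AG decoder of $C^{\textup{H}}$ can correct. Let $\zeta n$ be the number of zero coordinates of $\bfc^{\textup{H}}$ and let $Z_H$ denote this zero set; note that $\bfc^{\textup{HL}}$ equals $(0,0)$ exactly on $Z_H$. The first loop of the algorithm discards every $(0,0)$ symbol, which is equivalent, from the perspective of \texttt{Match}, to applying $\zeta n$ extra deletions to the idealized sequence $\bfc^{\textup{ID}}=((c_i,s_i))_{i\in[n]}$ on top of the $\delta n$ adversarial indels. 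Invoking \Cref{lem:HS-lemma} with a total of $(\delta+\zeta)n$ indels, the output $\widetilde{\bfy}$ of \texttt{Match} differs from $\bfc^{\textup{H}}$ by $e$ erasures and $t$ substitutions with $e+2t\le(\delta+\zeta+12\tau)n$.

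The new step (line~\ref{lin:our-replace-questions}) replaces every ``?'' by $0$. Letting $E$ denote the erasure set of $\widetilde{\bfy}$ before replacement, an erasure at a position $i\in Z_H$ becomes correct (since $c_i=0$), while an erasure at a position $i\notin Z_H$ turns into a genuine substitution. Thus the Hamming distance between the post-replacement $\widetilde{\bfy}$ and $\bfc^{\textup{H}}$ equals exactly $t+|E\setminus Z_H|$, and the entire analysis reduces to bounding this quantity.

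The crux, and the main obstacle, is to upgrade the naive bound $t+|E\setminus Z_H|\le e+2t\le(\delta+\zeta+12\tau)n$ (which depends on $\zeta$ and is therefore too weak) to the $\zeta$-free estimate $t+|E\setminus Z_H|\le(\delta+12\tau)n$. For this I would use a differential variant of the~\cite{haeupler-survey2021synchronization} accounting. Consider the hypothetical run of \texttt{Match} on $\bfc^{\textup{ID}}$ with \emph{only} the $\zeta n$ pre-deletions and no adversarial indels. By the $(\tau/24)^2$-self-matching property of $\bfs$, the LCS loop matches each surviving $s_i$ ($i\notin Z_H$) to its true index, so the hypothetical output $\widetilde{\bfy}_0$ equals $\bfc^{\textup{H}}$ on $[n]\setminus Z_H$ and ``?'' on $Z_H$; in particular $t_0+|E_0\setminus Z_H|=0$. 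Applying the per-indel accounting of~\cite{haeupler-survey2021synchronization} relative to $\widetilde{\bfy}_0$ rather than $\bfc^{\textup{H}}$, each of the $\delta n$ adversarial indels contributes at most $1$ to $t+|E\setminus Z_H|$ -- either by unmatching a correctly matched position outside $Z_H$, trading a $Z_H$-erasure for a wrong match at $Z_H$, or producing a substitution directly -- while the inherent greedy-LCS slack contributes at most an additional $12\tau n$.

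Finally, \Cref{thm:AG-code} guarantees that the decoder of $C^{\textup{H}}$ recovers any pattern of $s$ substitutions and $e'$ erasures with $2s+e'<(\delta_H-1/(\sqrt{q}-1))n$. Applying this to the post-replacement $\widetilde{\bfy}$ with $s=t+|E\setminus Z_H|\le(\delta+12\tau)n$ and $e'=0$ yields $\bfc^{\textup{H}}$ as soon as $2(\delta+12\tau)<\delta_H-1/(\sqrt{q}-1)$, which is ensured by the hypothesis $\delta_H>2(\delta+12\tau)$ together with the negligibility of $1/(\sqrt{q}-1)$ when $q=\Theta(\varepsilon^{-4})$. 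From $\bfc^{\textup{H}}$ the message $\hat{m}$ is then read off, and the running time is dominated by the $O(1/\sqrt{\tau})$ LCS computations inside \texttt{Match}, each costing $O(n^2)$, and by the $O(n^3)$ AG decoder of \Cref{thm:AG-code}, for a total of $O(n^3)$.
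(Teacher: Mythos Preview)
Your argument is correct and follows essentially the same route as the paper: both identify that the first loop amounts to $\zeta n$ extra deletions at the zero-coordinate set $Z_H$, invoke the per-indel accounting of \texttt{Match} (the three bullet points preceding \Cref{lem:HS-lemma}) to argue that those benign deletions only create erasures inside $Z_H$ which the $?\to 0$ replacement undoes, and conclude that the residual Hamming error is at most $(\delta+O(\tau))n$, within reach of the AG decoder. Your ``differential'' framing via the baseline $\widetilde{\bfy}_0$ is just a repackaging of the paper's direct count.

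One small inaccuracy worth flagging: the assertion that in the hypothetical run ``the LCS loop matches each surviving $s_i$ ($i\notin Z_H$) to its true index'' so that $t_0+|E_0\setminus Z_H|=0$ is not literally guaranteed by the self-matching property --- even with only the $\zeta n$ benign deletions, up to $O(\sqrt{\tau}\,n)$ positions may be mismatched or unmatched. Since you fold the LCS slack back in at the end, your final bound $t+|E\setminus Z_H|\le(\delta+12\tau)n$ is unaffected; just be aware that the baseline is ``perfect up to slack,'' not perfect. The paper handles the all-zero codeword separately, and, like the paper, you are tacitly absorbing the additive $1/(\sqrt{q}-1)$ term from \Cref{thm:AG-code} into the strict inequality $\delta_H>2(\delta+12\tau)$.
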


\begin{proof}
    Let $\bfc^{\textup{H}}$ be the codeword in $C^{\textup{H}}$ that corresponds to $\bfc^{\textup{HL}}$ and assume that $\bfc^{\textup{H}}$ has $\zeta n$ zeros. Further, let $\bfc^{\textup{ID}}$ be the corresponding codeword in $C^{\textup{ID}}$.
    Let $\widetilde{\bfy}$ be the string obtained after running line~\ref{lin:our-match-input}. 
    Observe that up to this point, the algorithm is identical to \Cref{alg:cst-decode-half-linear} and therefore, the vector $((c'_1,s'_1), \ldots, (c'_m,s'_m))$ can be obtained from $\bfc^{\textup{ID}}$ by performing at most $\delta n$ indels and $\zeta n$ deletions. 
    Therefore, by \Cref{lem:HS-lemma}, $\widetilde{\bfy}$, obtained after line~\ref{lin:our-match-input}, can be obtained from $\bfc^{\textup{H}}$ by performing $t$ substitutions and $e$ erasures where $e + 2t \leq (\delta + \zeta + 12\sqrt{\tau})n$.

    Now, instead of immediately decoding $\widetilde{\bfy}$, as is done in \Cref{alg:cst-decode-half-linear}, we replace all question marks in $\widetilde{\bfy}$ with the value $0$ (line~\ref{lin:our-replace-questions}). 
    For each such replacement, there are two options. If the erased symbol corresponds to a zero symbol, then the replacement operation ``fixed'' the erasure. Otherwise, the erasure corresponds to a nonzero symbol and therefore, this operation turned an erasure into a substitution. Denote by $e'$ the number of erasures that correspond to nonzero values. 
    Therefore, after line~\ref{lin:our-replace-questions} is performed, $\widetilde{\bfy}$ can be obtained from $\bfc^{\textup{H}}$ by performing only $t' = t + e'$ substitutions. 

    It remain to upper bound $t'$. For this purpose we recall the analysis of the algorithm \texttt{Match} and focus on the value of $\widetilde{\bfy}$ after line~\ref{lin:our-match-input} but before line~\ref{lin:our-replace-questions}. Remember that there are at most $O(\sqrt{\tau}n)$ substitutions that are caused by mismatches. 
    All other substitutions are caused by two indel operations (a deletion followed by an insertion). 
    As for the erasures, again, there are at most $O(\sqrt{\tau}n)$ of them that are caused by the algorithm (unmatched symbols). Every other erasure is caused by either an insertion or a deletion. 
    
    Therefore, the maximal number of substitution between $\bfc^{\textup{H}}$ and $\widetilde{\bfy}$, after line~\ref{lin:our-replace-questions}, is $\delta n + O(\sqrt{\tau}n)$. Indeed, at the worst-case scenario, every deletion in $\bfc^{\textup{HL}}$ turns into a substitution between $\bfc^{\textup{H}}$ and $\widetilde{\bfy}$. 
    Therefore, since $\delta n + 12 \sqrt{\tau} n < \frac{1}{2}\cdot \delta_H n$, we get that the decoder of $C^{\textup{H}}$ decodes correctly $\widetilde{\bfy}$ and thus, the algorithm succeeds.

    Note that we did not analyze the case where the zero vector is transmitted. 
    In this case, the number of symbols that are going to be the input to \texttt{Match} is at most $\delta n$ (the number of insertions the adversary can perform) which imlies that at most $\delta n$ symbols can be matched. Thus, after performing line~\ref{lin:our-replace-questions}, $\widetilde{\bfy}$ has Hamming weight at most $\delta n$ and the decoding on line~\ref{lin:our-final-decoding} succeeds. 

    We are left to analyze the time complexity of the algorithm. Clearly, the entire loop just scans the input and thus runs in linear time. The \texttt{Match} algorithm takes $O(n^2)$ time and decoding of $C^{\textup{H}}$ takes $O(n^3)$, according to \Cref{thm:AG-code}. Thus, the total running time is dominated by $O(n^3)$.  
\end{proof}

\subsection{Proving Theorem~\ref{thm:half-linear-code}}
In this section, we combine the previous pieces together to get the main theorem of this section. Note that this theorem is a more formal reformulation of \Cref{thm:half-linear-code}.
\begin{thm}\label{thm:half-lin-code-formal}
Let $\delta \in (0, \frac{1}{2})$ and let $\varepsilon > 0$ be small enough. There exists $q_0(\varepsilon) = \Theta(\varepsilon^{-4})$ such that for every $q>q_0(\varepsilon)$ that is a square the following holds. 
There exists an explicit family of codes $\{C_i\}_{i=1}^{\infty}$ over $\mathbb{F}_q^2$ of lengths $\{n_i\}_{i=1}^{\infty}$ where $n_i\to\infty$ as $i\to\infty$ such that for all $i$
\begin{itemize}
    \item $C_i$ is linear over $\Fq$.
    \item $C_i$ has rate $\cR \geq \frac{1}{2} - \delta -\varepsilon$.
    \item $C_i$ can correct $\delta$-fraction of indel errors in $O(n^3)$ time.
\end{itemize} 
\end{thm}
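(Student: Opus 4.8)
The plan is to instantiate \Cref{const:cst22-const} with parameters chosen so that the decoding guarantee of \Cref{prop:half-lin-correcness} applies and the resulting rate is $\tfrac12 - \delta - \varepsilon$. First I would set $\tau$ to be a sufficiently small constant depending on $\varepsilon$ — concretely something like $\tau = \Theta(\varepsilon^2)$ — so that the various $\Theta(\sqrt{\tau})$ and $\Theta(\tau)$ slack terms that appear in \Cref{lem:HS-lemma} and in the proof of \Cref{prop:half-lin-correcness} are each at most, say, $\varepsilon/4$. Then I would choose the prime $p = \Theta(\tau^{-2}) = \Theta(\varepsilon^{-4})$ and set $q = p^2 = \Theta(\varepsilon^{-8})$; wait — the theorem asks for $q = \Theta(\varepsilon^{-4})$, so in fact one should take $p = \Theta(\varepsilon^{-2})$ (i.e. $p = \Theta(\tau^{-1})$ with $\tau=\Theta(\varepsilon)$), giving $q = p^2 = \Theta(\varepsilon^{-4})$ and $1/(\sqrt q - 1) = \Theta(\varepsilon)$; I would reconcile the exact exponent bookkeeping by choosing $\tau = \Theta(\varepsilon)$ and tracking that $12\sqrt{\tau}$ then contributes an $O(\sqrt{\varepsilon})$ term, which is still $o(1)$ but forces $q = \Theta(\varepsilon^{-4})$ only if one is slightly more careful — the safe route is to set $\tau = \Theta(\varepsilon^2)$ and $p = \Theta(\varepsilon^{-2})$ so that both $1/(\sqrt q -1)=\Theta(\varepsilon^{-2})^{-1/2}\cdot\dots$ hmm. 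I would simply pick $\tau$ and $q$ so that $12\sqrt\tau + \tfrac{1}{\sqrt q - 1} \le \varepsilon/2$ and $q=\Theta(\varepsilon^{-4})$, which is achievable since the binding constraint is $q \gtrsim \varepsilon^{-4}$.

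Next I would fix the Hamming-distance parameter of the inner AG code. The decoding condition in \Cref{prop:half-lin-correcness} requires $\delta_H > 2(\delta + 12\tau)$ (and, reading its proof, really $\delta n + 12\sqrt\tau n < \tfrac12 \delta_H n$, i.e. $\delta_H > 2\delta + 24\sqrt\tau$); I would therefore set $\delta_H = 2\delta + \varepsilon/2$ (valid since $\delta < \tfrac12$ and $\varepsilon$ is small, so $\delta_H < 1 - 1/(\sqrt q - 1)$ as \Cref{thm:AG-code} demands). By \Cref{thm:AG-code} there is an explicit family $\{C^{\textup{H}}_i\}$ over $\Fq$ with normalized Hamming distance $\delta_H$ and rate $R(C^{\textup{H}}_i) \ge 1 - \delta_H - \tfrac{1}{\sqrt q - 1} \ge 1 - 2\delta - \varepsilon$, and with an $O(n^3)$-time decoder correcting $e$ erasures and $t$ substitutions whenever $2t + e < (\delta_H - \tfrac{1}{\sqrt q - 1})n$. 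Using \Cref{cor:sync-exist}, I would take $\bfs$ to be a $(\tau/24)^2$-self-matching string over $\Fq^*$, and then define $C_i = C^{\textup{HL}}_i$ as in \eqref{def:half-linear-code}. Each $C_i$ is linear over $\Fq$ by construction, and since each coordinate of $C^{\textup{HL}}$ is a pair from $\Fq \times \Fq = \mathbb{F}_{q^2}$ while the information content equals that of $C^{\textup{H}}$, we get $R(C_i) = \tfrac12 R(C^{\textup{H}}_i) \ge \tfrac12(1 - 2\delta - \varepsilon) = \tfrac12 - \delta - \varepsilon/2 \ge \tfrac12 - \delta - \varepsilon$.

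It remains to verify the error-correction and complexity claims, which is essentially an invocation of \Cref{prop:half-lin-correcness}: with $\delta_H = 2\delta + \varepsilon/2 > 2(\delta + 12\tau)$ (true once $\tau < \varepsilon/96$), that proposition says \Cref{alg:our-decode-half-linear} recovers $\bfc$ from any corruption of $\bfc^{\textup{HL}}$ by $\delta n$ indels, in time $O(n^3)$. So the three bullets hold for every $i$ and the theorem follows. The only genuinely delicate point — and the part I would write most carefully — is the constant-chasing that pins the alphabet size at $q = \Theta(\varepsilon^{-4})$: one must simultaneously have (i) $\tau$ small enough that $12\sqrt\tau$ eats only an $\varepsilon$-fraction of the distance budget, (ii) $p = \Theta(\tau^{-2})$ prime, and (iii) $q = p^2$; balancing $\tau = \Theta(\varepsilon^2)$ against $p = \Theta(\varepsilon^{-2})$ gives $q = \Theta(\varepsilon^{-4})$, and I would double-check that the slack term $12\sqrt\tau = \Theta(\varepsilon)$ together with $1/(\sqrt q - 1) = \Theta(\varepsilon^2)$ indeed fits inside the $\varepsilon/2$ margin I allotted in $\delta_H$. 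Everything else — linearity, the rate computation, and the $O(n^3)$ runtime — is immediate from the cited results. I would also remark that the restriction $\delta < \tfrac12$ is exactly what keeps $\delta_H = 2\delta + \varepsilon/2$ below the admissible range of \Cref{thm:AG-code} for small $\varepsilon$ and large $q$.
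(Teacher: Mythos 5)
Your proposal is correct and follows the same route as the paper: instantiate \Cref{const:cst22-const} with $\tau = \Theta(\varepsilon^2)$ (the paper takes $\tau = (\varepsilon/48)^2$) and $\delta_H$ slightly above $2\delta + 24\sqrt\tau$, invoke \Cref{prop:half-lin-correcness} for decoding and runtime, and observe $R(C^{\textup{HL}}) = \tfrac12 R(C^{\textup{H}}) \ge \tfrac12 - \delta - \varepsilon$. The parameter tension you flag between $q = \Theta(\tau^{-4})$ in \Cref{const:cst22-const} and $q = \Theta(\varepsilon^{-4})$ in the theorem is a genuine looseness in the paper as well (its own proof asserts $1/(\sqrt q - 1) = \Theta(\varepsilon^2)$, implicitly taking $q = \Theta(\varepsilon^{-4})$ rather than $\Theta(\tau^{-4}) = \Theta(\varepsilon^{-8})$), and your resolution --- treating $q \gtrsim \varepsilon^{-4}$ as the binding constraint --- is the right reading.
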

\begin{proof}

Let $\varepsilon > 0$ be a small enough constant and set $\tau = (\varepsilon/48)^2$. Let $\delta>0$ and set $\delta_H > 2(\delta + 12\tau)$. Now, let $C^{\textup{HL}}$ be the code defined in \Cref{const:cst22-const}.  
Giving \Cref{prop:half-lin-correcness}, it remains to compute the rate-error-correction tradeoff. 
Observe that the rate of $C^{\textup{HL}}$ is half of the rate of $C^{\textup{H}}$ and thus,
\begin{align*}
    \mathcal{R} &= \frac{1}{2}\cdot \left( 1 - \delta_H -\frac{1}{\sqrt{q} - 1}\right) \\
    &\geq \frac{1}{2}\cdot \left( 1 - 2\delta -24 \sqrt{\tau} - \frac{1}{\sqrt{q} - 1}\right) \\
    &\geq\frac{1}{2} - \delta - \varepsilon \;,
\end{align*}
where the last inequality follows by our choice of $\tau$ and since $1/(\sqrt{q} - 1) = \Theta(\varepsilon^{2})\leq \varepsilon/2$ for small enough $\varepsilon$.

\end{proof}


\section{From Half-Linear Codes to Linear Codes}\label{sec:From Half-Linear Codes to Linear Codes}

\subsection{Construction}

Let $\ell > 0$ be an integer. We will present a construction that transforms $C^{\textup{HL}}$ into a linear code of length \( \roundDown{2n \cdot \frac{\ell+1}{\ell}} \).
We will denote this code by $C^{\ell}$ and show that it can correct from up to $\delta n/\ell$ indels.

We shall define two operations for this purpose. The first is $\textup{Flat}: (\Fq^2)^n \rightarrow \Fq^{2n}$ which, on a given input vector $((x_1,y_1), \ldots, (x_n, y_n))$, outputs $(x_1,y_1,x_2,y_2, \ldots, x_n,y_n)$.
The second operation is $\textup{Pad}_{\ell}: \Fq^{2n} \to \Fq^{\roundDown{\frac{2n}{\ell} \cdot (\ell+1)}}$ which, on a given input vector, after every $2\ell$ elements, adds two zeros.

\begin{exmp}   
For the vector,
\(
\bfc = ((0,2), (1,2), (0,1), (2,2)) \in (\mathbb{F}_3^2)^4\), we get that
\[
\textup{Pad}_3(\textup{Flat}(\bfc)) = (0,2,1,{\color{blue}0,0,}2,0,1,{\color{blue}0,0},2,2)
\]
\end{exmp}
\begin{const} \label{const:our-lin-const}
    Let $\delta_H>0$ and let $\tau$ be a small enough constant. Let $C^{\textup{HL}}$ be the code defined in \Cref{const:cst22-const} with $\tau, \delta$.
    We define,
    \begin{equation} \label{eq:lin-code}
        C^{\ell} = \left\{\, \textup{Pad}_{\ell}\bigl(\textup{Flat}(\bfc)\bigr) \;\big|\; \bfc \in C^{\textup{HL}} \,\right\}.
    \end{equation}
 \end{const}
 It holds that \( C^{\ell} \) is a linear code and has length
\(
 \lfloor 2n \cdot \frac{\ell+1}{\ell} \rfloor\;.
 \)
Also, observe that the function \( \text{Pad}_{\ell} \circ \text{Flat}_{U} \) is injective; therefore  
\(
|C^{\ell}| = |C^{\textup{HL}}|.
\)  
In the next section, we show that \( C^{\ell} \) can correct $\delta n/\ell$ indels, thereby completing the proof of the theorem. 

We shall need the following trivial claim which establishes that all the runs of the symbol $0$ in codewords of $C^{\ell}$ are of even length.
\begin{claim} \label{clm:double-zero-prop}
    For every codeword $\bfc\in C^{\ell}$, every run of the symbol zero is of even length. 
\end{claim}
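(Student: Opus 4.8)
The plan is to trace through the two operations $\textup{Flat}$ and $\textup{Pad}_\ell$ and track exactly when a zero symbol of $\Fq$ can appear in a codeword of $C^\ell$, and argue that zeros can only occur in adjacent pairs aligned to even positions. First I would recall that a codeword of the half-linear code $C^{\textup{HL}}$ has the form $\bigl((c_1, s_1 c_1), \ldots, (c_n, s_n c_n)\bigr)$ where $\bfc \in C^{\textup{H}}$ and each $s_i \in \Fq^*$ (by \Cref{const:cst22-const}, the self-matching string is over $\Fq^*$). The key observation is that within each pair $(c_i, s_i c_i)$, since $s_i \neq 0$, we have $c_i = 0 \iff s_i c_i = 0$; that is, the two coordinates of the pair are either both zero or both nonzero.

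Next I would apply $\textup{Flat}$. After flattening, the vector is $(c_1, s_1 c_1, c_2, s_2 c_2, \ldots, c_n, s_n c_n) \in \Fq^{2n}$. By the observation above, the zero coordinates of this vector come in the blocks $\{2i-1, 2i\}$: either both of positions $2i-1, 2i$ are zero (when $c_i = 0$) or both are nonzero (when $c_i \neq 0$). Hence every maximal run of zeros in $\textup{Flat}(\bfc^{\textup{HL}})$ is a concatenation of such consecutive aligned blocks, and therefore has even length. Then I would apply $\textup{Pad}_\ell$, which inserts a block of two zeros after every $2\ell$ coordinates. Since $2\ell$ is even and the inserted block has even length $2$, the padding preserves the property that zeros sit in aligned even-length pieces: an inserted $(0,0)$ block either stands alone as a run of length $2$, or merges with an adjacent run of zeros from the original flattened vector (which, being a union of the $\{2i-1,2i\}$ blocks, also ends/starts at the right parity), yielding again an even-length run. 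So every maximal run of zeros in a codeword of $C^\ell$ has even length.

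The cleanest way to phrase the argument rigorously is probably to set up a fixed partition of the coordinates of $\textup{Pad}_\ell(\textup{Flat}(\bfc^{\textup{HL}}))$ into consecutive pairs (the $n$ ``data pairs'' $\{c_i, s_i c_i\}$ together with the $\lfloor n/\ell \rfloor$ or so ``padding pairs'' $(0,0)$, which is possible precisely because $\textup{Pad}_\ell$ inserts its zeros after multiples of $2\ell$, an even number, so the pair structure is not disturbed), observe that each pair in this partition is either all-zero or all-nonzero, and conclude that any maximal run of zeros is a union of complete pairs and hence of even length. I do not anticipate a real obstacle here; the only point needing a little care is checking that the padding insertions respect the pairing, i.e. that $\textup{Pad}_\ell$ never splits a data pair $\{2i-1, 2i\}$ — this follows because it inserts only after positions that are multiples of $2\ell$, which are even.
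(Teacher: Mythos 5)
Your proof is correct and uses the same idea as the paper's: each pair $(c_i, s_i c_i)$ is all-zero or all-nonzero because $s_i \neq 0$, so after flattening zeros sit in aligned length-$2$ blocks, and $\textup{Pad}_\ell$ only ever inserts length-$2$ zero blocks at even offsets. You are a bit more careful than the paper about why the padding insertions cannot split a data pair (the offsets $2\ell$ are even), but this is the same argument, just spelled out more explicitly.
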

\begin{proof}
    Recall that every symbol of every codeword $\bfc^{\textup{HL}}\in C^{\textup{HL}}$ is of the form $(c_i, s_i \cdot c_i)$ where $s_i$ is nonzero. Thus, it must be that either both are zero or both are nonzero. 
    Therefore, in $\textup{Flat}(\bfc^{\textup{HL}})$ there cannot be a run of zeros of odd length. 
    Clearly, applying $\textup{Pad}_{\ell}$ adds only runs of zeros of even length.
\end{proof}
\subsection{Decoding Algorithm}
To describe our decoding algorithm, we need to define a \emph{segmentation} of a sequence.
\begin{defn}\label{def:segmentation}
    Let \( \bfy \in \mathbb{F}_q^m\). write $\bfs$ as \[
    \bfs = 0^{d_0} \circ \bfy_1 \circ 0^{d_1} \circ \bfy_2 \circ \cdots \circ 0^{d_t} \circ \bfy_t \circ 0^{d_t},
    \]
    where each $\bfy_j$ is a maximal length contiguous subsequence that does not contain zeros. Then, \(\textup{seg}(\bfy) = (\bfy_1, \dots, \bfy_t)\). 
    Denote by $m_j$ the length of every $\bfy_j$. Each \( \bfy_i \) is called a \emph{window}, and the sequences of zeros between windows are called \emph{delimiters}.
\end{defn}

\begin{exmp} 
Let \( \bfy = (1, 1, 1, 0, 2, 1, 3, 0, 0, 0, 1) \) be a sequence over \( \mathbb{F}_5 \). Then,
\(
\text{seg}(\bfy) = \big((1, 1, 1), (2, 1, 3), (1)\big).
\)
This means that,
\(
\bfy_1 = (1, 1, 1), \bfy_2 = (2, 1, 3), \bfy_3 = (1).
\)
\end{exmp}

\begin{algorithm}
\caption{Decode $C^{\ell}$}
\begin{algorithmic}[1]
\Require A vector $\bfy$
\Ensure A codeword $\bfc\in C^{\textup{HL}}$
\State $L\gets$ empty list
\State Compute $\text{seg}(\bfy) = (\bfy_1, \ldots, \bfy_t)$ \label{lin:lin-dec-segmetation}
\For{$j = 1$ to $t$} \label{lin:lin-dec-main-loop}
    \If{$|\bfy_j| > 2\ell$ or $|\bfy_j|$ is odd} \label{lin:lin-dec-odd-or-big}
        \State Continue
    \EndIf
    \State Split $\bfy_j = (y_{j,1}, \ldots, y_{j,|\bfy_j|})$ into pairs:
    \[
        (y_{j,1},y_{j,2}),(y_{j,3},y_{j,4}), \ldots, (y_{j, |\bfy_j|-1}, y_{j, |\bfy_j|})\;.
    \] \label{lin:lin-dec-pair-gen}
    \State Append these pairs to the end of $L$.
\EndFor
\State Apply the decoding algorithm of $C^{\textup{HL}}$ on $L$.\label{lin:lin-dec-final-decode}
\end{algorithmic}
\label{alg:linear-decoder}
\end{algorithm}

    

Our decoding algorithm is given in \Cref{alg:linear-decoder}. In the following proposition, we show its correctness.
\begin{prop} \label{prop:lin-code-alg-correctness}
    Let $\tau$ be a small enough constant. Let $\delta>0$ and define $\delta_H > 2\cdot (\delta\cdot \ell + 12\tau)$. Let $C^{\ell}$ be the code defined in \Cref{const:our-lin-const} with $\tau$ and $\delta_H$. 
    
    Let $\bfc\in C^{\ell}$ be a codeword and assume that $\bfy$ is obtained from $\bfc^{\ell}$ after performing $\delta n$ indels. 
    Then, on input $\bfy$, \Cref{alg:linear-decoder} returns $\bfc^{\textup{HL}}$ in time $O(n^3)$.


\end{prop}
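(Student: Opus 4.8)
The plan is to track how the $\delta n$ indels performed on $\bfc^{\ell}$ translate, through the segmentation step, into a bounded number of indels on the underlying half-linear codeword $\bfc^{\textup{HL}}$, and then invoke \Cref{prop:half-lin-correcness}. The key structural fact is \Cref{clm:double-zero-prop}: in $\bfc^{\ell}$ every run of zeros has even length, and moreover the nonzero ``content'' between padding blocks comes in groups of exactly $2\ell$ symbols. So in the \emph{uncorrupted} word, $\textup{seg}(\bfc^{\ell})$ yields windows each of even length at most $2\ell$, and pairing them up as in line~\ref{lin:lin-dec-pair-gen} recovers exactly $\textup{Flat}(\bfc^{\textup{HL}})$, hence $L = \bfc^{\textup{HL}}$ when no errors occur. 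First I would make this base case precise: the padding delimiters $0^2$ inserted by $\textup{Pad}_{\ell}$ separate the flattened stream into windows of length $2\ell$ (the last possibly shorter), each of which is an even-length run-of-nonzeros-interspersed-with-internal-even-zero-runs — wait, more carefully, $\textup{Flat}(\bfc^{\textup{HL}})$ itself may contain internal zero runs of even length (from zero symbols $(0,0)$ of $\bfc^{\textup{HL}}$), so a ``window'' of $\textup{seg}$ may actually be shorter than a full $2\ell$-block. The crucial point is only that every genuine window of $\textup{seg}(\bfc^{\ell})$ has even length $\le 2\ell$, so none is discarded at line~\ref{lin:lin-dec-odd-or-big}, and concatenating the pair-lists over all windows reassembles $\textup{Flat}(\bfc^{\textup{HL}})$ with the $(0,0)$-symbols correctly restored. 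Then I would argue each pair $(y_{j,2i-1},y_{j,2i})$ appended to $L$ is exactly some symbol $(c_i, s_i c_i)$ of $\bfc^{\textup{HL}}$.

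Next I would handle the corrupted case. Each indel operation on $\bfc^{\ell}$ affects at most a constant amount of the segmentation structure; the goal is to show that the list $L$ produced by \Cref{alg:linear-decoder} can be obtained from $\bfc^{\textup{HL}}$ by at most $\delta n$ indels. The accounting I would do: a single insertion or deletion in $\bfc^{\ell}$ can (i) change the length-parity of a window, causing that whole window to be dropped at line~\ref{lin:lin-dec-odd-or-big}, (ii) split one window into two or merge two windows into one, or (iii) lengthen a window beyond $2\ell$, again causing a drop. In case (i) or (iii), dropping a window of length $\le 2\ell+O(1)$ removes at most $\ell+O(1)$ pairs from $L$, i.e.\ costs at most $\ell$ deletions on $\bfc^{\textup{HL}}$ per original indel — this is exactly where the factor $\ell$ blowup (and the hypothesis $\delta_H > 2(\delta\ell + 12\tau)$) comes from. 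Splitting/merging on its own does not discard content as long as the resulting pieces stay even and $\le 2\ell$; if a merge pushes a piece past $2\ell$ it reduces to case (iii). Altering a single symbol's value between zero and nonzero, or inserting/deleting a symbol inside a window, likewise perturbs at most $O(1)$ pairs of $L$ near that location, costing $O(1)$ indels on $\bfc^{\textup{HL}}$. Summing over all $\delta n$ indels, $L$ is within $\delta n \cdot \ell$ indels of $\bfc^{\textup{HL}}$ — I would state the constant carefully so that it matches $\delta\ell$ exactly (the $O(1)$ terms should be absorbable, or the statement tolerates a slightly larger slack which the $12\tau$ budget covers). Then by \Cref{prop:half-lin-correcness} applied with error fraction $\delta\ell$ on a length-$n$ half-linear codeword — since $\delta_H > 2(\delta\ell + 12\tau)$ — line~\ref{lin:lin-dec-final-decode} returns $\bfc^{\textup{HL}}$.

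For the running time: the segmentation in line~\ref{lin:lin-dec-segmetation} is a single linear scan, the main loop does $O(1)$ work per symbol, and line~\ref{lin:lin-dec-final-decode} invokes \Cref{alg:our-decode-half-linear}, which by \Cref{prop:half-lin-correcness} runs in $O(n^3)$; hence the total is $O(n^3)$.

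\textbf{Main obstacle.} The delicate part is the window-accounting in the corrupted case: carefully verifying that each of the $\delta n$ indels contributes at most $\ell$ (plus lower-order) indels to the distance between $L$ and $\bfc^{\textup{HL}}$, handling the interaction of several indels clustered near one delimiter, and confirming that windows whose length accidentally becomes even and $\le 2\ell$ after corruption — but whose pairing no longer aligns with the true symbol boundaries of $\bfc^{\textup{HL}}$ — still only introduce $O(1)$ substitutions/indels locally rather than globally corrupting the stream. One must also check the degenerate case where $\bfc^{\textup{HL}}$ (or large portions of it) are zero, mirroring the zero-codeword discussion in the proof of \Cref{prop:half-lin-correcness}; there the delimiters and the all-zero windows merge, but since the adversary can create at most $\delta n$ nonzero symbols the resulting $L$ has bounded Hamming weight and decodes correctly.
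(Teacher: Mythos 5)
Your proposal follows the same route as the paper: segment the received word, argue that each of the $\delta n$ indels on $\bfc^{\ell}$ translates into at most $\ell$ indels between $L$ and $\bfc^{\textup{HL}}$ (via dropped odd-length windows, pair-misalignment inside a window, and discarded over-long merged windows), invoke \Cref{prop:half-lin-correcness} with budget $\delta\ell n$, and handle the zero codeword and the $O(n^3)$ runtime separately. So there is no divergence in approach.

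The place you explicitly defer — nailing the per-indel cost at $\ell$ with no $O(1)$ slack — is exactly where the paper's proof does its real work, and you should not lean on the $12\tau$ budget for it: a per-indel overhead of $O(1)$ would contribute $\Theta(\delta n)$ to the distance, which is not absorbable by a $\tau$-sized term since $\tau$ is independent of $\delta$. The paper closes this by exhausting the adversary's moves: an odd number of indels inside a window drops at most $\ell$ pairs and is cheapest at a single indel; an even number $I+D$ of indels inside one window deletes at most $\ell$ pairs and inserts at most $\ell + (I-D)/2$, and the ratio is again maximized at $I=D=1$; and deleting delimiters merges windows into a too-long window that gets dropped, but \Cref{clm:double-zero-prop} forces at least two deletions per delimiter, so wiping out a single length-$2$ delimiter costs two indels and drops at most $4\ell$ symbols (i.e.\ $2\ell$ pairs) — again $\ell$ per indel — with longer runs of deleted delimiters only lowering the ratio. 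Your case (iii) underestimates the merged window size as $\le 2\ell + O(1)$ (it can reach $4\ell$); the per-deletion cost still works out to $\ell$, but only once you make \Cref{clm:double-zero-prop} carry the argument that a delimiter never disappears for one deletion.
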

\begin{proof} 

For the zero word, there are at most
$\delta n / \ell$ insertions
and at most
$\delta n / \ell$ deletions.
During the execution of the algorithm, all zeros are ignored,
and for each of the inserted symbols, some are discarded in line~4,
while the remaining ones are combined such that every two symbols become a single symbol added to $L$.
Therefore, $L$ contains at most
$\delta n / (2\ell)$
symbols.
In the execution of the decoding algorithm in line~10,
the \textsc{Match} algorithm will return
a word with Hamming weight at most
$\delta n / (2\ell)$,
and in this case the decoding algorithm will return the zero word as required.

For the non zero word. Let $\bfc^{\ell}$ be the transmitted codeword and let $\bfc^{\textup{HL}}$ be the corresponding codeword in $C^{\textup{HL}}$. Our goal is to show that just before performing line~\ref{lin:lin-dec-final-decode} it holds that $D_L(L, \bfc^{\textup{HL}}) \leq \delta \ell n$. This implies that the decoding algorithm of $C^{\textup{HL}}$ when given as input $L$, succeeds according to \Cref{prop:half-lin-correcness}.
    
    Let $\bfs = \textup{seg}(\bfc^{\ell})$. The adversary, who knows the decoding algorithm, will try to cause as much ``damage'' as he can with his budget of $\delta n$ indel errors.
    In the following, we list several operations the adversary can do and analyze their outcome.

    First, the adversary can cause an odd number of indels inside a window of $\bfs$. In this case, the length of the window becomes odd, and the algorithm (line~\ref{lin:lin-dec-odd-or-big}) will discard it. This causes our algorithm to effectively delete at most $\ell$ nonzero symbols from $\bfc^{\textup{HL}}$. Clearly, the most ``economic'' way for the adversary to achieve this effect is to perform a single indel.

    Second, the adversary can perform an even number of indels inside a window of $s$. More specifically, assume that there were $I$ insertions and $D$ deletions to this window. We consider two subcases
    \begin{itemize}
        \item First assume that $I\geq D$.
        In this case, in the worst-case scenario, all the symbols of the window in $\bfc^{\textup{HL}}$ suffer from substitutions. 
        Indeed, the adversary can easily damage the pair synchronization of the symbols (e.g., delete the first symbol in the window and insert a symbol at the end of the window) and then the algorithm (in line~\ref{lin:lin-dec-pair-gen}) will form pairs such that every pair is misaligned and thus wrong.
        Also, observe that if $I\geq D$, then
        $(I-D)/2$ symbols are inserted to this window. In total, in this case, the adversary, by performing $D+I$ indels, deleted at most $\ell$ pairs and inserted $\ell + (I-D)/2$ new pairs. 
   \item 
        Second, if $I < D$, then the number of pairs that are formed in line~\ref{lin:lin-dec-pair-gen} is $r - (D-I)/2$ where $r$ is the number of pairs in the original window of $\bfs$. 
        Thus, in this case, by performing $D+I$ indels, the adversary deletes at most $\ell$ pairs and inserts $\ell - (D-I)/2$. 
    \end{itemize}
    Concluding these two subcases, we see
    the ratio between the number of ``pair'' errors and the number of indels that the adversary performs is maximized when $D = I = 1$.

    Third, the adversary can delete delimiters of $\bfc^{\ell}$. This can cause a merge of two adjacent windows. According to our construction, the size of a window is at most $2\ell$, and thus, merging two windows creates a new window of length at most $4\ell$. 
    By our algorithm, any window of length greater than $2\ell$ is not considered for decoding (line~\ref{lin:lin-dec-odd-or-big}).
    Therefore, by deleting say $r$ consecutive delimiters, our algorithm deletes at most $(r+1) \cdot 2\ell$ pairs of symbols from $\bfc^{\textup{HL}}$. 
    Now, by \Cref{clm:double-zero-prop}, deleting a delimiter requires at least two deletions. 
    Thus, one can verify that the best case for the adversary is to delete a single delimiter of length two and then the algorithm deletes $2\ell$ pairs from $\bfc^{\textup{HL}}$. 

    Concluding all the above cases, we get every indel the adversary does to $\bfc^{\ell}$ can cause at most $\ell$ indels to $\bfc^{\textup{HL}}$. Thus, since the adversary can perform at most $\delta n$ indels, it holds that $D_L (\bfc^{\textup{HL}}, L)\leq \ell \cdot \delta n$ and line~\ref{lin:lin-dec-final-decode} succeeds and $\bfc^{\textup{HL}}$ is returned.

    We are left to show the time complexity of the algorithm. Clearly, in line~\ref{lin:lin-dec-segmetation}, computing the segmentation of $\bfy$ takes linear time. The loop on line~\ref{lin:lin-dec-main-loop} also takes linear time since we just split each window into pairs.
    Thus, the total time complexity is $O(n) + T_{\textup{dec}}(C^{\textup{HL}}) = O(n^3)$ where the equality is due to \Cref{prop:half-lin-correcness}.
\end{proof}

\begin{thm} \label{thm:C-l-rate-error-trd}
    Let $\ell > 0$ be an integer.
    Let $\delta \in (0, 1/16)$ and let $\varepsilon > 0$ be small enough. There exists $q_0(\varepsilon) = \Theta(\varepsilon^{-4})$ such that for every $q>q_0(\varepsilon)$ that is a square, the following holds. 
There exists an explicit family of linear codes $\{C^{\ell}_i\}_{i=1}^{\infty}$ over $\mathbb{F}_q^2$ of lengths $\{n_i\}_{i=1}^{\infty}$ where $n_i\to\infty$ as $i\to\infty$ such that for all $i$
\begin{itemize}
    \item $C^{\ell}_i$ has rate $\cR \geq \frac{\ell}{\ell + 1}\left( \frac{1}{2} -  2(\ell+1)\cdot \delta\right) -\varepsilon$.
    \item $C^{\ell}_i$ can correct $\delta$-fraction of indel errors in $O(n^3)$ time.
\end{itemize} 
\end{thm}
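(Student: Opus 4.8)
The plan is to instantiate \Cref{const:our-lin-const} with a suitable choice of parameters and then combine \Cref{prop:lin-code-alg-correctness} with the rate computation for $C^{\textup{HL}}$ from the proof of \Cref{thm:half-lin-code-formal}. First I would fix the free integer parameter $\ell$ (it will be chosen at the very end, roughly as $\ell \approx \frac{1}{2\sqrt{\delta}}$ to balance the two error terms), set $\tau = \Theta(\varepsilon^2)$ small enough, and set $\delta_H > 2(\delta\ell + 12\tau)$ — note that the decoder of \Cref{prop:lin-code-alg-correctness} corrects a $\delta$-fraction of indels in $C^{\ell}$, which corresponds to $\delta\ell n$ indels in the underlying $C^{\textup{HL}}$ codeword of length $n$, hence the scaling of $\delta_H$. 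I would then invoke \Cref{thm:AG-code} (via \Cref{const:cst22-const}) to get $C^{\textup{H}}$ of rate $1 - \delta_H - \frac{1}{\sqrt q - 1}$, so that $R(C^{\textup{HL}}) = \tfrac12 R(C^{\textup{H}}) \ge \tfrac12\bigl(1 - 2\delta\ell - 24\sqrt{\tau} - \tfrac{1}{\sqrt q - 1}\bigr) \ge \tfrac12 - \delta\ell - \varepsilon'$ for an appropriately small $\varepsilon'$, exactly as in the proof of \Cref{thm:half-lin-code-formal}.

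Next I would translate this into the rate of $C^{\ell}$. Since $\textup{Pad}_{\ell}\circ\textup{Flat}$ is injective, $|C^{\ell}| = |C^{\textup{HL}}|$, but the block length grows from $n$ (over $\mathbb{F}_{q^2}$) to $\lfloor 2n\cdot\frac{\ell+1}{\ell}\rfloor$ (over $\mathbb{F}_q$), i.e. effectively $n\cdot\frac{\ell+1}{\ell}$ symbols over $\mathbb{F}_{q^2}$. Hence
\[
R(C^{\ell}) = \frac{\ell}{\ell+1}\,R(C^{\textup{HL}}) \ge \frac{\ell}{\ell+1}\left(\frac12 - \delta\ell - \varepsilon'\right).
\]
Now $\delta\ell\cdot\frac{\ell}{\ell+1} \le \delta\ell$, and I want the clean form stated in the theorem; absorbing constants I can write this as $\frac{\ell}{\ell+1}\bigl(\frac12 - 2(\ell+1)\delta\bigr) - \varepsilon$ provided $2(\ell+1)\delta \ge \delta\ell + (\text{slack})$, which holds comfortably, and then fold the $\frac{\ell}{\ell+1}\varepsilon' \le \varepsilon'$ term into $\varepsilon$. (One must double-check the direction of the inequalities so that the final bound is a genuine lower bound on the rate — this is the one place where sloppiness could flip a sign.) The correctness and $O(n^3)$ running time are immediate from \Cref{prop:lin-code-alg-correctness}, and the field size bound $q_0(\varepsilon) = \Theta(\varepsilon^{-4})$ is inherited from \Cref{const:cst22-const}, since $\tau = \Theta(\varepsilon^2)$ forces $p = \Theta(\tau^{-2}) = \Theta(\varepsilon^{-4})$ and $q = p^2$; the hypothesis $\delta < 1/16$ ensures that for the eventual optimal $\ell$ the Hamming-distance parameter $\delta_H$ stays below $1$ so that \Cref{thm:AG-code} applies.

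The main obstacle, such as it is, is bookkeeping rather than conceptual: making sure that the three separate small-error budgets — the $\frac{1}{\sqrt q - 1}$ from the AG code, the $24\sqrt{\tau}$ from the self-matching analysis in \Cref{lem:HS-lemma}, and the rate loss $\frac{1}{\ell+1}$ from padding — are all simultaneously absorbable into a single $\varepsilon$, and that the choice $\delta_H > 2(\delta\ell + 12\tau)$ is consistent with $\delta_H < 1 - \frac{1}{\sqrt q - 1}$ (needed for \Cref{thm:AG-code}), which again is where $\delta < 1/16$ and ``$\varepsilon$ small enough'' get used. I would present the proof as: (i) invoke the construction and \Cref{prop:lin-code-alg-correctness} for correctness and running time; (ii) compute $R(C^{\textup{HL}})$ verbatim as in \Cref{thm:half-lin-code-formal}; (iii) multiply by $\frac{\ell}{\ell+1}$ and simplify to the stated form, checking the constant $2(\ell+1)$ works; (iv) note the field-size dependence. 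Separately — and this is presumably done in the next theorem rather than here — one optimizes over $\ell$: setting $\ell = \Theta(1/\sqrt\delta)$ makes $\frac{1}{\ell+1} = \Theta(\sqrt\delta)$ and $2(\ell+1)\delta\cdot\frac{\ell}{\ell+1} = \Theta(\sqrt\delta)$ match up to give the advertised $\frac12 - 2\sqrt\delta - \varepsilon$ of \Cref{thm:full linear results}, but for \Cref{thm:C-l-rate-error-trd} itself $\ell$ is just a free parameter and no optimization is needed.
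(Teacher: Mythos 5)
Your high-level plan is the same as the paper's: instantiate \Cref{const:our-lin-const}, get correctness and running time from \Cref{prop:lin-code-alg-correctness}, compute $R(C^{\textup{HL}})$ as in \Cref{thm:half-lin-code-formal}, and multiply by $\frac{\ell}{\ell+1}$ for the length increase from padding. That structure is right.

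However, there is a genuine gap in how you handle the normalization of $\delta$, and it is not merely bookkeeping: it is exactly where the factor $2(\ell+1)$ comes from. In \Cref{prop:lin-code-alg-correctness}, the quantity ``$\delta n$ indels'' uses $n$ to denote the length of the \emph{underlying} code $C^{\textup{HL}}$, while the codeword $\bfc^{\ell}\in C^{\ell}$ being corrupted actually has length $\lfloor 2n\cdot\frac{\ell+1}{\ell}\rfloor$. So if you feed \Cref{prop:lin-code-alg-correctness} a parameter $\delta'$, the resulting code $C^{\ell}$ corrects a fraction
\[
\frac{\delta' n}{2n\cdot\frac{\ell+1}{\ell}} \;=\; \frac{\ell}{2(\ell+1)}\,\delta',
\]
not a $\delta'$-fraction. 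Your proposal instead asserts that the decoder ``corrects a $\delta$-fraction of indels in $C^{\ell}$, which corresponds to $\delta\ell n$ indels in $C^{\textup{HL}}$,'' conflating the count normalized by $n$ with the fraction of $C^{\ell}$'s actual block length. This is precisely the change of variable $\delta = \frac{\ell}{2(\ell+1)}\delta'$ that the paper performs explicitly, and without it your derived rate bound $\frac{\ell}{\ell+1}\bigl(\frac12 - \delta\ell - \varepsilon'\bigr)$ is using the wrong $\delta$: if your $\delta$ really were the fraction corrected, that bound would be strictly stronger than the theorem and is unachievable with this construction.

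Your subsequent ``absorbing constants'' step, where you pass from $\delta\ell$ to $2(\ell+1)\delta$ by noting $2(\ell+1)\delta \ge \delta\ell$, is therefore a fudge that happens to reproduce the statement but does not follow from your setup; the coefficient $2(\ell+1)$ should appear as an \emph{equality} $\ell\delta' = 2(\ell+1)\delta$ after substituting $\delta' = \frac{2(\ell+1)}{\ell}\delta$, and the $\varepsilon$-slack is handled separately by setting $\varepsilon' = \frac{\ell+1}{\ell}\varepsilon$ so that $\frac{\ell}{\ell+1}\varepsilon' = \varepsilon$ exactly. You flag this as the one place a sign could flip, and indeed it is the place where the argument as written breaks. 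To repair the proof, replace the loose inequality with the explicit change of variable between the \Cref{prop:lin-code-alg-correctness} parameter and the true indel fraction of $C^{\ell}$.
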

\begin{proof}
    Let $\varepsilon>0$ be a small enough constant, and set $\varepsilon' = \frac{\ell + 1}{\ell} \cdot \varepsilon$. Moreover, set $\tau = (\varepsilon'/48)^2$ and set $\delta_H > 2\cdot (\ell \cdot \delta' + 12\tau)$ for some $\delta'$. Define the code $C^{\ell}$ according to \Cref{const:our-lin-const} with $\tau$ and $\delta_H$. 
    According to \Cref{prop:lin-code-alg-correctness}, the code can correct any $\delta' n$ indel errors. However, since the length of $C^{\ell}$ is $2n \cdot \frac{\ell+1}{\ell}$, the actual fraction of deletions $C^{\ell}$ can correct is $\delta := \frac{\ell}{2(\ell+1)} \delta'$. Furthermore, note that the construction of $C^{\ell}$ in \Cref{const:our-lin-const} is defined using the code $C^{\textup{HL}}$ (\Cref{const:cst22-const}) with the same parameters $\delta_H$ and $\tau$. Thus, by \Cref{thm:half-linear-code}, the rate of $C^{\textup{HL}}$ is $\frac{1}{2} - \ell \cdot \delta' - \varepsilon'$. 

    Now, since $|C^{\textup{HL}}| = |C^{\ell}|$, we have
    \[
        \mathcal{R}(C^{\ell}) = \frac{\ell}{\ell+1} \cdot \mathcal{R}(C^{\textup{HL}}) \geq \frac{\ell}{\ell + 1}\left( \frac{1}{2} -  \ell\cdot \delta' -\varepsilon' \right) \;.
    \]
    The theorem follows by our choice of $\varepsilon$ and the relation between $\delta'$ and $\delta$.
\end{proof}


\subsection{Proving \Cref{thm:full linear results}}\label{sec:high rate regime}
In this section, we prove Theorem~\ref{thm:full linear results}. We restate it below a bit more formally as a corollary of \Cref{thm:C-l-rate-error-trd}.
\begin{cor} \label{thm:full-linear-code-formal}
Let $\delta \in (0, 1/16)$ and let $\varepsilon > 0$ be small enough. There exists $q_0(\varepsilon) = \Theta(\varepsilon^{-4})$ such that for every $q>q_0(\varepsilon)$ that is a square, the following holds. 
There exists an explicit family of linear codes $\{C_i\}_{i=1}^{\infty}$ over $\mathbb{F}_q^2$ of lengths $\{n_i\}_{i=1}^{\infty}$ where $n_i\to\infty$ as $i\to\infty$ such that for all $i$
\begin{itemize}
    \item $C_i$ has rate $\cR \geq \frac{1}{2} - 2\sqrt{\delta} -\varepsilon$.
    \item $C_i$ can correct $\delta$-fraction of indel errors in $O(n^3)$ time.
\end{itemize} 
\end{cor}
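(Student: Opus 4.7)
The plan is to derive Corollary~\ref{thm:full-linear-code-formal} directly from Theorem~\ref{thm:C-l-rate-error-trd} by choosing the integer parameter $\ell$ that balances the two loss terms in the rate formula. Starting from the bound in Theorem~\ref{thm:C-l-rate-error-trd}, I will first rewrite it in the more transparent form
\[
\cR(C_i^{\ell}) \;\geq\; \frac{\ell}{\ell+1}\cdot\frac{1}{2} - 2\ell\delta - \varepsilon
\;=\; \frac{1}{2} - \frac{1}{2(\ell+1)} - 2\ell\delta - \varepsilon.
\]
This exhibits two competing losses: a ``padding'' loss $\frac{1}{2(\ell+1)}$ that shrinks as $\ell$ grows, and an ``indel amplification'' loss $2\ell\delta$ that grows with $\ell$. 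Treating $\ell$ as a continuous variable, setting the derivative to zero yields $\ell^{\star} = \frac{1}{2\sqrt{\delta}} - 1$, at which point both losses equal $\sqrt{\delta}$.

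The second step is to round $\ell^{\star}$ to a usable integer. I will take
\[
\ell \;=\; \left\lceil \frac{1}{2\sqrt{\delta}} \right\rceil - 1.
\]
The hypothesis $\delta < 1/16$ gives $\frac{1}{2\sqrt{\delta}} > 2$, so $\ell \geq 2$ is a positive integer as required by Theorem~\ref{thm:C-l-rate-error-trd}. By construction, $\ell + 1 \geq \frac{1}{2\sqrt{\delta}}$, so $\frac{1}{2(\ell+1)} \leq \sqrt{\delta}$; and from $\ell + 1 \leq \frac{1}{2\sqrt{\delta}} + 1$ we obtain $\ell \leq \frac{1}{2\sqrt{\delta}}$, whence $2\ell\delta \leq \sqrt{\delta}$. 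Substituting these two estimates into the rate formula above yields
\[
\cR(C_i^{\ell}) \;\geq\; \frac{1}{2} - \sqrt{\delta} - \sqrt{\delta} - \varepsilon \;=\; \frac{1}{2} - 2\sqrt{\delta} - \varepsilon,
\]
which is the claimed rate bound.

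The remaining assertions are inherited for free. The family $\{C_i^{\ell}\}_{i=1}^{\infty}$ is explicit, linear over $\Fq$, and efficiently decodable from a $\delta$-fraction of indels in time $O(n^3)$ by Theorem~\ref{thm:C-l-rate-error-trd}, and the field-size condition $q = \Theta(\varepsilon^{-4})$ is also inherited verbatim, since $\ell$ depends only on $\delta$ (not on $n$ or $\varepsilon$) and is a constant once $\delta$ is fixed. Honestly, once Theorem~\ref{thm:C-l-rate-error-trd} is in hand there is no real obstacle in the proof; the only judgment call is the choice of $\ell$, and the bound $\delta < 1/16$ is exactly what guarantees that the optimal $\ell^{\star} > 1$ so that rounding produces a valid positive integer.
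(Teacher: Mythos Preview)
Your proof is correct and follows essentially the same approach as the paper: both expand the rate bound from Theorem~\ref{thm:C-l-rate-error-trd} into the form $\tfrac{1}{2} - \tfrac{1}{2(\ell+1)} - 2\ell\delta - \varepsilon$ (the paper writes it equivalently as $\tfrac{1}{2} + 2\delta - 2\delta(\ell+1) - \tfrac{1}{2(\ell+1)} - \varepsilon$) and then choose the integer $\ell$ with $\tfrac{1}{2\sqrt{\delta}} \leq \ell+1 \leq \tfrac{1}{2\sqrt{\delta}} + 1$, which is exactly your $\ell = \lceil \tfrac{1}{2\sqrt{\delta}} \rceil - 1$. Your write-up is in fact slightly more explicit about why $\delta < 1/16$ is needed to ensure $\ell$ is a valid positive integer.
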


\begin{proof}
According to \Cref{thm:C-l-rate-error-trd}, we have
\begin{align*}
\mathcal{R}(C^{\ell}) &\geq \frac{\ell}{\ell + 1}\left( \frac{1}{2} -  2(\ell + 1)\cdot \delta \right)  -\varepsilon \\
&= \left(1 - \frac{1}{\ell + 1}\right) \cdot \left( \frac{1}{2} -  2(\ell + 1)\cdot \delta  \right) - \varepsilon \\
&= \frac{1}{2} + 2\delta - 2\delta(\ell+1)-\frac{1}{2(\ell+1)} - \varepsilon \;.
\end{align*}

Now, by setting $\ell$ to be the integer such that $\frac{1}{2\sqrt{\delta}} \leq \ell+1 \leq \frac{1}{2\sqrt{\delta}} + 1$, we get the desired rate.

\end{proof}


\section{Half–Singleton Bound for Base‑Field Codes}
\label{sec:abdel_extension}
In this section, we prove Theorem~\ref{thm:half_singleton_base}. For convenience, we restate it below.

\begin{reptheorem}{thm:half_singleton_base}
Let \(\mathbb{E}\subset\mathbb{F}\) be finite fields and let \(\mathcal{C}\subseteq\mathbb{F}^{n}\) be an \(\mathbb{E}\)-linear code that can correct up to \(\delta n\) indels for some fixed \(\delta\in[0,1)\).  
Then,
\[
R(\mathcal{C})\;\le\;\frac{1-\delta}{2}+ \frac{1}{2n},
\]
\end{reptheorem}

To prove this theorem, we will closely follow the proof of \cite{abdel2007linear}. We start by borrowing a lemma from \cite{abdel2007linear} which gives a sufficient condition for a linear code to fail to correct even one deletion. In \cite{abdel2007linear} this lemma was proved for linear codes; however, the proof used only closure under addition and thus the claim holds also for codes that are linear over a subfield. For completeness, we shall also include the proof.

\begin{lem} \cite[Lemma 2]{abdel2007linear} \label{lem:cond-for-not-del-correct}
    Let $C$ be an $[n,k]_q$ linear code over $\mathbb{E}$. Then, $C$ cannot correct a single deletion if it contains a nonzero codeword $\bfx=(x_1,\ldots,x_n)$ such that $\bfc = (c_1,\ldots,c_n)$ is also a codeword where $c_i = \sum_{j=1}^{i-1}x_j$ for all $i\in [n]$.
\end{lem}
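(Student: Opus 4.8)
The plan is to show directly that if such a pair $(\bfx,\bfc)$ exists, then $C$ contains two \emph{distinct} codewords that both collapse, after a single deletion, to the \emph{same} string of length $n-1$; a decoder that receives this string cannot decide which of the two codewords was transmitted, so $C$ cannot correct a single deletion.

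First I would introduce the auxiliary codeword $\bfc' := \bfc + \bfx$. Since $C$ is closed under addition — this is the only consequence of $\mathbb{E}$-linearity that the argument uses, which is exactly why the statement also holds for codes linear only over a subfield — we have $\bfc' \in C$, and because $\bfx \neq \bfzero$ we have $\bfc' \neq \bfc$. Next I would unwind the coordinates of $\bfc'$ using the defining relation $c_i = \sum_{j=1}^{i-1} x_j$ (with the empty-sum convention $c_1 = 0$): for every $i \in \{1,\dots,n-1\}$,
\[
c'_i = c_i + x_i = \sum_{j=1}^{i-1} x_j + x_i = \sum_{j=1}^{i} x_j = c_{i+1},
\]
so $\bfc' = (c_2, c_3, \dots, c_n, c'_n)$ with $c'_n = c_n + x_n$. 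In words, $\bfc'$ is the left shift of $\bfc$ with a fresh symbol appended at the end.

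Then I would exhibit the collision: deleting the first coordinate of $\bfc$ yields the length-$(n-1)$ string $(c_2, c_3, \dots, c_n)$, whereas deleting the \emph{last} coordinate of $\bfc'$ yields $(c'_1, \dots, c'_{n-1}) = (c_2, c_3, \dots, c_n)$ by the identity above. Hence the word $(c_2,\dots,c_n)$ is a valid single-deletion image of both $\bfc$ and $\bfc'$; since $\bfc \neq \bfc'$, no decoder can recover the transmitted codeword from it, which is precisely the assertion that $C$ fails to correct one deletion.

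I do not expect a genuine obstacle here, as the argument is elementary. The only points requiring a little care are the index bookkeeping behind $c'_i = c_{i+1}$ (tracking the $c_1=0$ convention and the two endpoints), and being explicit about what ``cannot correct a single deletion'' means: it suffices to find two distinct codewords sharing a common single-deletion descendant, and crucially the deleted positions in the two codewords need not coincide — here it is position $1$ in $\bfc$ versus position $n$ in $\bfc'$.
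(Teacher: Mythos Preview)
Your proof is correct and uses essentially the same idea as the paper: exhibit two distinct codewords $\bfc$ and $\bfc+\bfx$ (the paper writes $\bfc-\bfx$) whose single-deletion images at positions $1$ and $n$ coincide, relying only on closure under addition. The paper takes a slightly longer route by first stating the full ``if and only if'' characterization from \cite{abdel2007linear} for general deletion positions $u\le u'$ and then specializing to $u=1$, $u'=n$, whereas you go directly to that special case; your version is more streamlined for the statement at hand.
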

\begin{proof}
The authors of the cited paper proved the following lemma for the case of a linear code. We adapt it here to our setting of a linear code over a base field.
Lemma, Let $\mathbb{E}, \mathbb{F}$ be fields such that $\mathbb{F}$ is an extension field of $\mathbb{E}$. Consider a code 
$C \subseteq \mathbb{F}$
that is closed under addition and under scalar multiplication by elements of $\mathbb{E}$. Then $C$ cannot correct a deletion if and only if $C$ contains a codeword 
$\bfc = (c_1,\ldots,c_n)$
such that for some $1 \leq u \leq u' \leq n$ and some field element $a \in F$, the vector 
$x = (x_1,\ldots,x_n)$
defined by
\[
x_i = \begin{cases}
0 & \text{for } i \in [1,u) \cup (u',n], \\
c_{i+1}-c_i & \text{for } i \in [u,u'), \\
a & \text{for } i=u',
\end{cases}
\]
is a nonzero codeword.
In one direction, assume that $C$ cannot correct a deletion. Then there exist distinct codewords $\bfc, \bfc' \in C$ such that deleting coordinate $u$ from $\bfc$ and deleting coordinate $u'$ from $\bfc'$ yield the same word, for some $1 \leq u \leq u' \leq n$.  
Let $a = c_u = c'_{u'}$ and define $\bfx = \bfc - \bfc'$. Since $-1$ belongs to every field, closure under scalar multiplication by $\mathbb{E}$ implies that $-\bfc' \in C$. By closure under addition, we conclude that $\bfx \in C$. Moreover, $\bfx \neq 0$ because $\bfc \neq \bfc'$, as required.
Conversely, suppose that $C$ contains a codeword $\bfc$ such that for some $1 \leq u \leq u' \leq n$ and some $a \in \mathbb{F}$, the vector $\bfx$ defined above is a nonzero codeword. Set $\bfc' = \bfc - \bfx$. By the same argument, $\bfc' \in C$ and $\bfc' \neq c$. We then obtain two distinct codewords $\bfc, \bfc'$ such that deleting coordinate $u$ from $\bfc$ and coordinate $u'$ from $\bfc'$ yields the same word. Hence $C$ cannot correct a deletion.
Finally, note that by applying the lemma with $u=1$, $u'=n$, $c_1=0$, and $a = x_n$, we obtain the desired conclusion.

\end{proof}

\begin{prop} \label{prop:sing-indel-ub}
    Let \(\mathbb{E}\subset\mathbb{F}\) be finite fields and set \(\ell\coloneqq[\mathbb{F}:\mathbb{E}]\).  
Let \(\mathcal{C}\subseteq\mathbb{F}^{n}\) be an $[n,k]_q$ code that is linear over $\mathbb{E}$. Then, if the rate of $C$ is strictly bigger than $1/2$, the code cannot correct a single indel.
\end{prop}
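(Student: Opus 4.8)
The plan is to transcribe the argument of \cite{abdel2007linear}, but to carry out the dimension count over the base field $\mathbb{E}$ rather than over $\mathbb{F}$. First I would reduce to a single \emph{deletion}: a code that corrects every single indel in particular corrects every single deletion, so it suffices to show that $\mathcal{C}$ cannot correct one deletion. By \Cref{lem:cond-for-not-del-correct} (whose proof, as noted, only uses closure under addition and under $\mathbb{E}$-scaling, so it applies here), it is therefore enough to exhibit a \emph{nonzero} codeword $\bfx=(x_1,\dots,x_n)\in\mathcal{C}$ whose partial-sum vector $\bfc=(c_1,\dots,c_n)$, defined by $c_i=\sum_{j=1}^{i-1}x_j$, also lies in $\mathcal{C}$.

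Next I would set up the linear algebra. Let $T:\mathbb{F}^{n}\to\mathbb{F}^{n}$ be the map sending $\bfx$ to its partial-sum vector $\bigl(0,\,x_1,\,x_1+x_2,\,\dots,\,\sum_{j=1}^{n-1}x_j\bigr)$; this is $\mathbb{F}$-linear, hence in particular $\mathbb{E}$-linear. Let $\pi:\mathbb{F}^{n}\to\mathbb{F}^{n}/\mathcal{C}$ be the quotient map of $\mathbb{E}$-vector spaces, and consider the $\mathbb{E}$-linear composition $\psi\coloneqq\pi\circ\bigl(T|_{\mathcal{C}}\bigr):\mathcal{C}\to\mathbb{F}^{n}/\mathcal{C}$. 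A nonzero element of $\ker\psi$ is precisely a nonzero $\bfx\in\mathcal{C}$ with $T(\bfx)\in\mathcal{C}$, i.e.\ exactly the codeword demanded by \Cref{lem:cond-for-not-del-correct}. By rank--nullity over $\mathbb{E}$,
\[
\dim_{\mathbb{E}}\ker\psi \;\ge\; \dim_{\mathbb{E}}\mathcal{C}-\dim_{\mathbb{E}}\!\bigl(\mathbb{F}^{n}/\mathcal{C}\bigr)\;=\;2\dim_{\mathbb{E}}\mathcal{C}-\ell n .
\]

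Finally I would convert the rate hypothesis into the needed inequality on the right-hand side above. Writing $\mathbb{E}=\mathbb{F}_{q_0}$ so that $q=q_0^{\ell}$, an $\mathbb{E}$-linear code of $\mathbb{E}$-dimension $\dim_{\mathbb{E}}\mathcal{C}$ has $|\mathcal{C}|=q_0^{\dim_{\mathbb{E}}\mathcal{C}}$, hence $R(\mathcal{C})=\frac{\log_q|\mathcal{C}|}{n}=\frac{\dim_{\mathbb{E}}\mathcal{C}}{\ell n}$. Thus $R(\mathcal{C})>\tfrac12$ is equivalent to $2\dim_{\mathbb{E}}\mathcal{C}>\ell n$, which forces $\dim_{\mathbb{E}}\ker\psi\ge 1$, so $\ker\psi\neq\{\bfzero\}$; the resulting nonzero codeword triggers \Cref{lem:cond-for-not-del-correct}, and $\mathcal{C}$ fails to correct a single deletion, hence a single indel.

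The argument is short, and the only place requiring care — the ``hard part,'' such as it is — is the bookkeeping: one must make sure that $T$ really is a well-defined map $\mathbb{F}^{n}\to\mathbb{F}^{n}$ and that its image on $\bfx$ matches the vector $\bfc$ in \Cref{lem:cond-for-not-del-correct} verbatim (including that the harmless constraint $c_1=0$ there is automatically satisfied), and that the quotient $\mathbb{F}^{n}/\mathcal{C}$ and the rank--nullity count are taken consistently over $\mathbb{E}$ rather than over $\mathbb{F}$, since $\mathcal{C}$ need not be an $\mathbb{F}$-subspace. The translation between $\dim_{\mathbb{E}}\mathcal{C}$ and the rate $R(\mathcal{C})$ (defined with logarithms base $q=|\mathbb{F}|$) is the one spot where the subfield structure actually enters, and it is exactly what makes $\ell$ cancel out of the final bound.
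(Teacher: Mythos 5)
Your proof is correct and is essentially the paper's argument: both invoke Lemma~\ref{lem:cond-for-not-del-correct} and then make a dimension count over $\mathbb{E}$ to force a nontrivial solution. You phrase the count as rank--nullity for the $\mathbb{E}$-linear map $\psi = \pi\circ T|_{\mathcal{C}}:\mathcal{C}\to\mathbb{F}^n/\mathcal{C}$, whereas the paper writes down a parity-check matrix over $\mathbb{E}$ and counts $2(\ell n-k)$ equations in $\ell n$ unknowns; these are the same bookkeeping, and yours is arguably the cleaner packaging.
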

\begin{proof}
    Every vector in $\Fq^n$ can be represented as a vector of length $\ell \cdot n$ over $\mathbb{E}$. Since the rate of the code is larger than $1/2$, we have
    \begin{align*}
        \frac{1}{2} < \frac{\log_{|\Fq|}|C|}{n} = \frac{\log_{|\mathbb{E}|}|C|}{\ell \cdot n}\;.
    \end{align*}
    We will view $C$ as a linear code over $\mathbb{E}$ with length $\ell\cdot n$ and dimension $k:=\log_{\mathbb{E}}|C|$. As such, it has a parity-check matrix $H \in \mathbb{E}^{(\ell n-k)\times \ell n}$. From \Cref{lem:cond-for-not-del-correct}, we know that $C$ cannot correct a single indel if there is $\bfx \in C$ and $\bfc\in C$ such that $c_i = \sum_{j=1}^{i-1}x_j$ for $i\in [n]$. 

    We prove that there are two such codewords in $C$. Indeed, from $\bfx \in C$, we have that $\bfx \cdot H = \bfzero$ which gives us $\ell n - k$ linear equations in the variables $x_1,\ldots, x_n$. From $\bfc \in C$ we get $\bfc\cdot H = \bfzero$, which gives another $\ell n - k$ linear equations (the equations are $\langle \sum_{j=1}^{i-1}x_j, \bfh_i \rangle$ where $h_i$ is the $i$th row of $H$). In total, we have $2 (\ell n - k)$ homogeneous linear equations in $\ell \cdot n$ variables. 
    Since by assumption, $k >\ell n/2$, it must be that this system has a nontrivial solution $\bfx = (x_1, \ldots, x_n)$. This nontrivial solution gives rise to two codewords $\bfx$ and $\bfc$ of the form described in \Cref{lem:cond-for-not-del-correct} and therefore, the code cannot correct an indel error. 
\end{proof}

Our next goal is to extend this proposition and prove \Cref{thm:half_singleton_base}. We shall follow the exact steps of the proof \cite{Cheng_2021}.

\begin{proof}[Proof of \Cref{thm:half_singleton_base}]
    Let $C$ be a code defined over $\mathbb{F}_q$ but is linear over $\mathbb{E}$. Furthermore, assume that the length of $C$ is $n$ and that $C$ can correct from $\delta n$ indels. 
    Delete from all codewords of $C$, the first $\delta n - 1$ symbols. 
    The resulting code, $C'$, has length $(1- \delta)n + 1$. 

    Now, observe that $C'$ can still correct a single indel. Furthermore, it is linear over $\mathbb{E}$ and it holds that $|C| = |C'|$ since otherwise, it would contradict the assumption that $C$ can correct $\delta n$ indels. According to \Cref{prop:sing-indel-ub}, we have
    \[
        \mathcal{R}(C')  = \frac{\log_{|\mathbb{F}_q|}|C'|}{(1- \delta)n + 1} \leq \frac{1}{2} \;,
    \]
    which implies that
    \[
    \mathcal{R}(C) = \frac{\log_{|F_q|}|C|}{n} = \frac{\log_{|\mathbb{F}_q|}|C'|}{n} \leq \frac{1 - \delta}{2} + \frac{1}{2n}
    \]
    and the theorem follows.
\end{proof}

\section{Summary and Open Problems}\label{sec:Open_Problems_and_Future_Directions}
In this paper, we continued the study on the performance of linear codes in the presence of indel errors.
We first proved that the Half-Singleton bound holds for codes that are linear over a subfield, i.e.\ codes closed under both addition and scalar multiplication by subfield elements.
Building on these findings, we conclude by highlighting several open problems.

\begin{itemize}
\item \textbf{Additive codes and the Half-Singleton bound.}
A natural next question is whether the same bound remains valid for codes that are merely \emph{additive}--closed only under addition--or whether one can construct additive codes that surpass it. 
Formally, fix the residue ring $\mathbb{Z}_r$ and let $C \subseteq \mathbb{Z}_r^{n}$ be an additive code that corrects $\delta n$ indels.  What is the maximal rate attainable by such a code?  When $r$ is a prime power the Half-Singleton bound continues to apply, but for composite $r$ that are not prime powers the problem is still open.

\item \textbf{Closing the gap to the Half-Singleton bound.}  
Our explicit half-linear and fully linear constructions already come closer to
the Half-Singleton bound, yet a non-negligible gap remains. 
A straightforward task is to further push the constructions and achieve an efficient construction that achieves the half-Singleton bound.

\item \textbf{Binary codes with high rate.}  
Another interesting question is to get binary codes from our codes in such a way that preserves the high rate. 
In \cite{cheng2023linear}, the authors construct binary linear codes of rate $1/2-\varepsilon$ correcting $\Omega(\varepsilon^3 \log^{-1}(1/\varepsilon))$.
Can we ``binarize'' our construction and improve the fraction indels a high rate code can correct?
\end{itemize}

Finally, we would also like to mention a problem that was raised in \cite{cheng2023linear} which asks for the zero-rate threshold of a linear binary code. That is, what is the maximal $\delta$ for which for every $\varepsilon$, there exists a code with non-vanishing rate correcting $\delta - \varepsilon$. A trivial upper bound on the zero-rate threshold is $1/2$.
In \cite{guruswami2022zero}, the authors show that the zero-rate threshold for general binary indel codes is at most $1/2 - \delta_0$ (where $\delta_0$ is a tiny constant). 
For the case of linear binary indel codes, the authors of \cite{cheng2023linear} conjectured that even for codes with dimension $3$, there exists an absolute constant $\delta_0>0$ such that the code cannot correct $1/2-\delta_0$-fraction of deletions.

\balance
\bibliographystyle{IEEEtran}
\bibliography{bibliofile}
\end{document}